\documentclass[conference,10pt]{IEEEtran}
\usepackage{graphicx,color}
\usepackage{amsmath, amsthm, amsfonts, amssymb, amsbsy,nccmath,bbm}
\usepackage{algorithm}
\usepackage{algpseudocode}
\usepackage{enumerate}
\usepackage{lipsum}
\newtheorem{lemma}{Lemma}

\usepackage[sort,compress]{cite}
\usepackage{epsfig}
\usepackage{epstopdf}
\usepackage{mathtools}
\usepackage{dsfont}
\usepackage{epstopdf}

\usepackage[inline]{enumitem}   
\makeatletter
\newcommand{\inlineitem}[1][]{%
\ifnum\enit@type=\tw@
    {\descriptionlabel{#1}}
  \hspace{\labelsep}%
\else
  \ifnum\enit@type=\z@
       \refstepcounter{\@listctr}\fi
    \quad\@itemlabel\hspace{\labelsep}%
\fi}
\makeatother
\newtheorem{theorem}{Theorem}

\newcommand{\beq}{\begin{equation}}
\newcommand{\eeq}{\end{equation}}

\def\adots{\mathinner{\mskip0mu\raise0pt\vbox{\kern7pt\hbox{.}}\mskip3mu
          \raise4pt\hbox{.}\mskip3mu\raise8pt\hbox{.}\mskip0mu}}

\newcommand{\diag}{\mbox{diag}}

\usepackage{bm}

\newcommand{\bmtheta}{{\bm \theta}}

\newcommand{\bmy}{{\bm y}}
\newcommand{\bmv}{{\bm v}}

\newcommand{\bmP}{{\bm P}}

\newcommand{\bmA}{{\bm A}}

\newcommand{\bmC}{{\bf C}}

\newcommand{\bmI}{{\bf {I}}}

\newcommand{\bmzero}{{\bm 0}}

\newcommand{\bma}{{\bm a}}

\newcommand{\thetah}{\widehat{\theta}}

\newcommand{\bmeta}{\boldsymbol{\eta}}

\makeatletter
\newcommand\fs@spaceruled{\def\@fs@cfont{\bfseries}\let\@fs@capt\floatc@ruled
  \def\@fs@pre{\vspace{0.5\baselineskip}\hrule height.8pt depth0pt \kern2pt}%
  \def\@fs@post{\kern1pt\hrule\relax}%
  \def\@fs@mid{\kern2pt\hrule\kern2pt}%
  \let\@fs@iftopcapt\iftrue}
\makeatother

\newcommand{\bit}{\begin{itemize}}
\newcommand{\eit}{\end{itemize}}

\renewcommand{\bmA}{{\mathbf A}}

\newcommand{\rank}{\mbox{rank}}
\newcommand{\bmB}{{\mathbf B}}

\newcommand{\bmb}{{\mathbf b}}

\newcommand{\rhob}{{\overline{\rho}}}

\usepackage{lipsum}

\renewcommand{\bmzero}{{\boldsymbol 0}}
\newcommand{\bmone}{{\boldsymbol 1}}

\DeclareMathOperator{\E}{\mathbb{E}}
\newcommand{\var}{\mbox{var}}

\newcommand{\bh}{{\widehat{b}}}

\newcommand{\bmAb}{{\overline{\bmA}}}
\newcommand{\bmthetab}{{\overline{\bmtheta}}}
\newcommand{\bmd}{{\mathbf{d}}}
\newcommand{\bmO}{{\mathbf{O}}}
\newcommand{\proj}{{\mathrm{proj}}}
\newcommand{\bmyb}{{\overline{\bmy}}}

\usepackage{subfiles}

\usepackage[acronym]{glossaries}

\newacronym{kld}{KLD}{Kullback–Leibler divergence}
\newacronym{snr}{SNR}{signal-to-noise ratio}
\newacronym{ap}{AP}{access point}

\newtheorem{proposition}{Proposition}

\begin{document}
\linespread{0.82}

\title{Approximating Univariate Factored Distributions via Message-Passing Algorithms}
\author{%
  \IEEEauthorblockN{Zilu Zhao,  Dirk Slock}
  \IEEEauthorblockA{
			\small
			Communication Systems Department, EURECOM, France \\	
			zilu.zhao@eurecom.fr, dirk.slock@eurecom.fr
			\vspace{-3mm}		}
	}

\maketitle

\begin{abstract}

Gaussian Mixture Models (GMMs) commonly arise in communication systems, particularly in bilinear joint estimation and detection problems. Although the product of GMMs is still a GMM, as the number of factors increases, the number of components in the resulting product GMM grows exponentially.
To obtain a tractable approximation for a univariate factored probability density function (PDF), such as a product of GMMs, we investigate iterative message-passing algorithms. Based on Belief Propagation (BP), we propose a Variable Duplication and Gaussian Belief Propagation (VDBP)-based algorithm.
The key idea of VDBP is to construct a multivariate measurement model whose marginal posterior is equal to the given univariate factored PDF. We then apply Gaussian BP (GaBP) to transform the global inference problem into local ones. 
Expectation propagation (EP) is another branch of message passing algorithms. In addition to converting the global approximation problem into local ones, it features a projection operation that ensures the intermediate functions (messages) belong to a desired family. Due to this projection, EP can be used to approximate the factored PDF directly. However, even if every factor is integrable, the division operation in EP may still cause the algorithm to fail when the mean and variance of a non-integrable belief are required. Therefore, this paper proposes two methods that combine EP with our previously proposed techniques for handling non-integrable beliefs to approximate univariate factored distributions.

\end{abstract}

\section{Introduction}
The Gaussian Mixture Model (GMM) commonly appears in communication problems. In bilinear joint channel estimation and data detection problems, the channel likelihood can be characterized as a GMM \cite{10378663}.
If the data sequence length is greater than one, this GMM becomes the product of multiple simple GMM factors. As the data length increases, the number of simple GMM factors composing the product GMM grows linearly. As a result, the number of components in the product GMM grows exponentially with the number of factors.

Although the moments of a single simple GMM with a limited number of components are easy to obtain, the moments of the product GMM in the bilinear estimation and detection problems become intractable as the number of factors increases. In \cite{6898015, 9460784, 10942970}, the authors attempt to avoid GMMs in bilinear joint estimation and detection using various methods. Distributions in the form of a product of GMMs can be generalized as univariate factored distributions. This paper investigates iterative message-passing-based algorithms to approximate the statistics of univariate factored distributions.

\subsection{Prior Works}
Given a univariate factored distribution, message-passing algorithms can be used to obtain the marginal distributions, which is equivalent to finding a stationary point of a constrained Bethe Free Energy (BFE) minimization \cite{heskes2005approximate}.

\subsubsection{Belief Propagation}
Belief Propagation (BP) was introduced by Pearl as an approximate inference tool \cite{pearl2014probabilistic}. Within the BFE framework, BP aims to find a stationary point of the BFE under strict marginal consistency constraints. BP is defined on the basis of a factored probability density function (PDF). A corresponding factor graph can be constructed by representing all factors and variables as nodes, and connecting a factor node to a variable node whenever the variable appears as an argument of that factor. By iteratively applying the sum–product rules, BP approximates the marginal distributions via auxiliary functions, called messages \cite{8187302, murphy2013loopybeliefpropagationapproximate}.

From the constrained BFE optimization perspective, BP seeks a stationary point of the BFE under strict marginal consistency constraints. This requirement often results in intractable functional forms for the messages. Consequently, even though BP reduces the dimensionality of the inference problem, its computational complexity may still be prohibitive when the messages are intractable—for example, when the factored pdf is a product of GMMs. \cite{ihler2003efficient, 1211409}

Gaussian BP (GaBP) \cite{roche2025affine, 9460784} is a variant of BP that projects messages to Gaussian distributions based on the Central Limit Theorem (CLT).

\subsubsection{Expectation Propagation}

Expectation Propagation (EP) \cite{minka2013expectationpropagationapproximatebayesian} was proposed to handle intractable messages by introducing a projection step that approximates messages within a chosen family of distributions. In the constrained BFE minimization framework, this corresponds to relaxing the strict marginal consistency constraints to moment constraints \cite{heskes2005approximate}. 

Compared to BP, a major problem with EP is that the factor-level beliefs (i.e., approximated marginals involving the arguments of a factor) may be non-integrable. This prevents EP from proceeding, since the computation of means and variances of such beliefs is required. In \cite{zhao2025expectationsexpectationpropagation}, the authors investigated this problem and proposed several methods to address it.

\subsection{Main Contribution}

This paper investigates both BP- and EP-based approaches for estimating the mean and variance of a univariate factored distribution. 

BP cannot be directly applied to approximate these statistics when every factor is, for example, a GMM. To address this issue, we propose Variable Duplication Gaussian Belief Propagation (VDBP) to avoid intractable computations. In multivariate inference problems, BP provides marginal distributions for individual variables, as well as joint marginals for variables appearing together in a factor. Leveraging this property, we construct a multivariate linear measurement model whose underlying marginal posterior (or joint PDF with all but one variable marginalized) is equivalent to the given univariate factored distribution up to a constant factor. We then apply GaBP to iteratively approximate the marginal posterior of the multivariate linear measurement model.

As the name suggests, VDBP converts a univariate factored PDF into a multivariate factored PDF by duplicating one variable for each factor, treating the univariate factored distribution as a multivariate joint PDF in which each factor is proportional to the PDF of an independent random variable. The resulting multivariate factored PDF is interpreted as prior information. The measurement channel and measurement outcome are designed such that the measurement likelihood enforces equality among all duplicated variables. Based on the properties of Dirac delta functions, VDBP constructs the measurement matrix as the transpose of the orthogonal complement of the all-ones vector (i.e., each row of the measurement matrix is orthogonal to the all-ones vector). The measurement outcome and noise are both constructed as all-zero vectors. 
We show analytically that the marginal posterior of the constructed measurement model is equivalent to the original univariate factored PDF. 

To overcome the problem of non-integrable beliefs when EP is used to approximate the statistics of univariate factored distributions, we adopt ideas from recent work \cite{zhao2025expectationsexpectationpropagation}. We investigated a persistent sequential EP algorithm for estimating the univariate factored PDF, in which updates associated with non-integrable beliefs are skipped. 

Furthermore, inspired by \cite{zhao2025expectationsexpectationpropagation}, we propose Analytic Continuation EP (ACEP), in which Gaussian distributions are represented using natural parameters. The technique of analytic continuation allows us to exploit discontinuities in Gaussian functions expressed in natural-parameter form. The key idea of ACEP is to dynamically shrink the projection family in EP to ensure that the belief associated with the next factor is integrable. 

Both EP-based algorithms proposed in this paper operate in a sequential manner. We show that the variable-level beliefs produced by both EP based algorithms are guaranteed to be integrable.

\nocite{schrempf2005optimal}

\section{Introduction to BP \label{sec:IntroBPEP}}
BP is a powerful inference tool that converts a global marginalization problem into multiple local ones. Consider a factored joint PDF with non-negative integrable factors $f_{\alpha}(\bmtheta_{\alpha})$
\beq
    p(\bmtheta)\propto \prod_\alpha f_{\alpha}(\bmtheta_{\alpha}), \label{eq:introBP}
\eeq
where $\bmtheta_{\alpha}$ denotes a subvector of $\bmtheta$ containing a subset of the entries of $\bmtheta$. The marginal PDF $p(\theta_i)=\int p(\bmtheta) d\bmthetab_{i}$, where $\bmthetab_{i}$ denotes all entries of $\bmtheta$ except $\theta_i$, involves the evaluation of a high-dimensional and generally intractable integral. BP approximates such marginal PDFs through an iterative message-passing procedure.

A factor graph corresponding to \eqref{eq:introBP} can be constructed by connecting each factor node to its associated variable nodes whenever the variable appears as an argument of the factor. Figure~\ref{fig:GeneralFG} illustrates an example of such factor graphs.
\begin{figure}[t]
    \centering
    \includegraphics[width=0.48\textwidth]{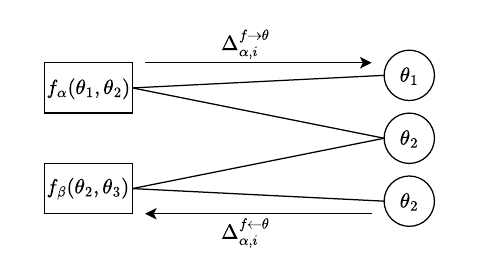}
    \vspace{-4mm}
    \caption{Factor Graph for Joint pdf $p(\theta_1, \theta_2, \theta_3)\propto f_{\alpha}(\theta_1, \theta_2)f_{\beta}(\theta_2, \theta_3)$} 
    \vspace{-2mm}
    \label{fig:GeneralFG}
    \vspace{-4mm}
\end{figure}
Each edge in the graph carries two messages: the message from the factor to the variable node, denoted by $\Delta^{f\to \theta}_{\alpha, i}(\theta_i)$, and the message from the variable to the factor node, denoted by $\Delta^{f\leftarrow\theta}_{\alpha, i}(\theta_i)$.

In essence, BP consists of a set of computational rules that describe how these messages are updated.
Let $N(i)$ denote the set of factor nodes neighboring variable $\theta_i$, and let $N(\alpha)$ denote the set of variable nodes neighboring factor $f_\alpha$. The BP update rules can be summarized as follows.

\subsection{Belief Propagation Rules}
\subsubsection{Factor-to-Variable Message} $\forall \alpha, i$, such that $i\in N(\alpha)$:
\beq
    \Delta^{f\to \theta}_{\alpha, i}(\theta_i)\propto \int f_{\alpha}(\bmtheta_{\alpha})\prod_{j\in N(\alpha)/\{i\}}\Delta^{f\leftarrow \theta}_{\alpha, j}(\theta_j)d\theta_j.
    \label{eq:ICASSP2606}
\eeq
\subsubsection{Variable-to-Factor Message} 
$\forall \alpha, i$, such that $\alpha\in N(i)$: 
\beq
    \Delta^{f\leftarrow \theta}_{\alpha, i}(\theta_i)\propto \!\!\!\!\!\prod_{\beta\in N(i)/\{\alpha\}}\!\!\!\!\!\Delta^{f\to \theta}_{\beta, i}(\theta_i).
    \label{eq:EPv2f}
\eeq
\subsection{Approximation Results of BP}
The BP rules in \eqref{eq:ICASSP2606} and \eqref{eq:EPv2f} are iterated until convergence. The resulting marginal posterior is approximated by variable-level belief
\beq
    p(\theta_i)\simeq b_{\theta_i}(\theta_i):\propto\prod_{\alpha\in N(i)} \Delta^{f\to \theta}_{\alpha, i}(\theta_i).
    \label{eq:ICASSP2607}
\eeq
A remark on BP is that the messages in \eqref{eq:ICASSP2606} and \eqref{eq:EPv2f} do not need to be normalized. However, in the following text, we sometimes refer to the ``mean'' and ``variance'' of these messages, which in fact correspond to those of their normalized versions. In contrast, the approximate marginal distribution (variable-level belief and factor-level belief) $b_{\theta_i}$ computed in \eqref{eq:ICASSP2607} must be normalized to one.

\section{Introduction to EP}

For the same factored joint PDF given in \eqref{eq:introBP}, EP is another message-passing algorithm that can be used to approximate the marginal distributions. Compared to BP, EP not only transforms global inference problems into local ones, but also projects the approximated marginal distributions onto desired families. In this paper, we assume the Gaussian family.

\subsection{Expectation Propagation Rules}

\subsubsection{Belief at Factor Node}
The factor-level belief is computed as
\beq
    b_{f_{\alpha}}(\bmtheta_{\alpha})\propto f_{\alpha}(\bmtheta_{\alpha})\prod_{i\in N(\alpha)}\Delta^{f\leftarrow\theta}_{\alpha, i}(\theta_{i}).
\eeq
Here, the $\propto$ relation indicates that $b_{f_{\alpha}}(\bmtheta_{\alpha})$ must be normalized to one. 
At convergence, $b_{f_{\alpha}}(\bmtheta_{\alpha})$ can be interpreted as an approximation of the  marginal distribution $p(\bmtheta_{\alpha})$.

\subsubsection{Marginalized Factor Belief}
We define the variable-level marginalization of $b_{f_{\alpha}}(\bmtheta_{\alpha})$ as follows: $\forall i, \alpha$, such that $i \in N(\alpha)$, 
\beq
    b_{f_{\alpha}}(\theta_{i})=\int b_{f_{\alpha}}(\bmtheta_{\alpha})  \prod_{j\in N(\alpha)/\{i\}}d\theta_j.
\eeq

\subsubsection{Project to Desired (Gaussian) Family}
Next, we project the marginalized factor belief onto a desired family $\mathcal{F}$ by minimizing the Kullback-Leibler divergence. 
\beq
    \proj\left[b_{f_{\alpha}}\right](\theta_i)=\arg\min_{\bh_{f_{\alpha}}\in\mathcal{F}} \mathrm{KLD}\left[b_{f_{\alpha}}(\theta_i)\|\bh_{f_{\alpha}}(\theta_{i})\right].
    \label{eq:icassp2607t}
\eeq

\subsubsection{Factor-to-Variable Message}
Then, $\forall i, \alpha$, such that $i\in N(\alpha)$, the message from factor $f_{\alpha}$ to variable $\theta_{i}$ is updated by
\beq
    \Delta^{f\to \theta}_{\alpha, i}(\theta_i)\propto \frac{\proj\left[b_{f_{\alpha}}\right](\theta_i)}{\Delta^{f\leftarrow\theta}_{\alpha, i}(\theta_{i})}.
    \label{eq:icassp2608t}
\eeq
Here, $\propto$ indicates equality up to a multiplicative constant.

\subsubsection{Variable Belief}
The variable-level belief is computed as
\beq
    b_{\theta_{i}}(\theta_i)\propto \prod_{\alpha\in N(i)}\!\!\!\!\!\Delta^{f\to \theta}_{\alpha, i}(\theta_i),
\eeq
where $\propto$ indicates that $b_{\theta_{i}}(\theta_i)$ must be normalized to one. At convergence, $b_{\theta_{i}}(\theta_i)$ can be interpreted as the approximation for the marginal distribution $p(\theta_i)$.

\subsubsection{Variable-to-Factor Message}
To close the loop, if $i\in N(\alpha)$, the message from variable $\theta_i$ to factor $f_{\alpha}$ is updated as
\beq
    \Delta^{f\leftarrow \theta}_{\alpha, i}(\theta_i)\propto\frac{b_{\theta_{i}}(\theta_i)}{\Delta^{f\to \theta}_{\alpha, i}(\theta_i)}.
\eeq
Here, $\propto$ denotes equality up to a constant factor.

\section{Problem Formulation}
Consider a univariate factored PDF with non-negative integrable factors $f_{n}(\theta)$
\beq
    p(\theta)\propto \prod_{n=1}^N f_n(\theta),
    \label{eq:firsteq}
\eeq
where the objective of this paper is to develop a low-complexity method for estimating the statistics of $\theta$. The direct computation of integrals involving $p(\theta)$ becomes intractable when each factor $f_{n}$ is a mixture model.

\section{VDBP-Based Approximation for Univariate Factored PDFs}
\subsection{Outline of the VDBP}
Inspired by \cite{rangan2018vectorapproximatemessagepassing}, we duplicate the variable $\theta$ into $N$ new variables $\{\theta_n\}_{n=1}^N$, one for each factor $f_n$, to construct a multivariate joint PDF. To describe the equality relationships among these newly constructed variables, we define a matrix $\bmA\in\mathbb{R}^{M\times N}$ with $M=N-1$, such that $\bmA\cdot\bmone=\bmzero$. This matrix can be constructed by first generating a full-rank random matrix $\bmB\in\mathbb{R}^{M\times N}$ and then setting
\beq
    \bmA=\bmB-\diag\left(\frac{\bmB\cdot\bmone_N}{N}\right)\bmone_{N-1}\bmone_{N}^T.
\eeq
A simple choice for such a matrix is a trimmed Hadamard matrix with the all-ones row removed.

We define the multivariate factored joint PDF as
\beq
    p(\bmtheta)\propto \delta(\bmA\bmtheta)\prod_n f_{n}(\theta_n)\propto \prod_{m}\delta(\bmA_{m,:}\,\bmtheta)\prod_n f_{n}(\theta_n), \label{eq:GMM3}
\eeq
where $\bmA_{m, :}$ denotes the $m$-th row of the matrix $\bmA$. 

\begin{lemma}
\label{lemma:icassp2601}
    $\forall \bmA\in \mathbb{R}^{(N-1)\times N}$ and $\forall n\in\{1, \dots, N\}$, let $\bmAb_n$ denote the matrix obtained by removing the $n$-th column of $\bmA$. If $\rank(\bmA)=N-1$ and $\bmA\cdot \bmone=\bmzero$, then $\rank(\bmAb_n)=N-1$.
\end{lemma}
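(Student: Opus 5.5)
The plan is to use rank--nullity on the matrix $\bmA$ itself, and then show that deleting a column cannot create a new null vector. Since $\bmA\in\mathbb{R}^{(N-1)\times N}$ has $\rank(\bmA)=N-1$, its null space is one-dimensional. The hypothesis $\bmA\cdot\bmone=\bmzero$ then pins this null space down exactly as $\mathrm{span}\{\bmone\}$.

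Next we characterize the null space of $\bmAb_n\in\mathbb{R}^{(N-1)\times(N-1)}$. Suppose $\bmAb_n\mathbf{x}=\bmzero$ for some $\mathbf{x}\in\mathbb{R}^{N-1}$. Let $\mathbf{y}\in\mathbb{R}^N$ be the vector obtained by inserting a $0$ into $\mathbf{x}$ at position $n$. Then
\beq
\bmA\mathbf{y}=\bmAb_n\mathbf{x}=\bmzero ,
\eeq
because the $n$-th column of $\bmA$ is multiplied by $y_n=0$. Hence $\mathbf{y}\in\mathrm{span}\{\bmone\}$, so $\mathbf{y}=c\bmone$ for some scalar $c$. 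Since $y_n=0$, we get $c=0$, so $\mathbf{y}=\bmzero$ and therefore $\mathbf{x}=\bmzero$.

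It follows that the square matrix $\bmAb_n$ has trivial null space. By rank--nullity, $\rank(\bmAb_n)=N-1$, which is the claim.

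No step is a real obstacle. The only point that needs care is the identification of the null space of $\bmA$ with $\mathrm{span}\{\bmone\}$. This requires both hypotheses together: the full row rank gives one-dimensionality of the null space, and $\bmA\cdot\bmone=\bmzero$ supplies the spanning vector. Once that is in place, the zero-padding argument is immediate. It would also be worth remarking afterwards that the argument works for any null vector with all entries nonzero, not only for $\bmone$.
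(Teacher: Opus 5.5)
Your proof is correct, but it takes a different route from the paper's. The paper argues by contradiction through row dependencies. If $\rank(\bmAb_n)<N-1$, then some row $[\bmAb_n]_{k,:}$ equals $\bmb^T\bmC_{kn}$, a combination of the other rows. Right-multiplying by $\bmone$ and using $\bmAb_n\bmone=-\bma_n$ gives $[\bmA]_{k,n}=\bmb^T\bmd_{kn}$, so the deleted entry obeys the same combination. After re-inserting column $n$ via the permutation $\bmP$, the full $k$-th row of $\bmA$ is a combination of its other rows, which contradicts $\rank(\bmA)=N-1$. In effect, the paper lifts a \emph{left} null vector of $\bmAb_n$ to one of $\bmA$. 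You instead zero-pad a \emph{right} null vector of $\bmAb_n$ into $\ker\bmA=\mathrm{span}\{\bmone\}$ and kill it with the $n$-th coordinate. Your version is shorter and avoids the bookkeeping with $\bmC_{kn}$, $\bmd_{kn}$ and $\bmP$. It also exposes the exact condition: for a given $n$, only $w_n\neq 0$ matters, where $\mathbf{w}$ spans $\ker\bmA$. Your remark about ``all entries nonzero'' is what is needed for every $n$ at once. On the other hand, you use the full-rank hypothesis up front to make $\ker\bmA$ one-dimensional. The mechanism behind the paper's argument uses only that the deleted column $\bma_n=-\bmAb_n\bmone$ lies in the span of the remaining columns. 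That fact alone gives $\rank(\bmAb_n)=\rank(\bmA)$ whenever $\bmA\bmone=\bmzero$, and in column-space form it is itself a one-line proof.
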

\begin{proof}
    We prove the lemma by contradiction. Assume that $\rank(\bmAb_n)<N-1$, then $\exists k\in \{1, \dots, N-1\}$ and $\exists \bmb^T \in \mathbb{R}^{1\times (N-2)}$, such that 
    \beq
        \left[\bmAb_{n}\right]_{k, :}=\bmb^T\bmC_{kn},
        \label{eq:icassp2607}
    \eeq
    where $\bmC_{kn}\in \mathbb{R}^{(N-2)\times (N-1)}$ is defined as
    \beq
        \bmC_{kn}=\begin{bmatrix}
            &\left[\bmAb_{n}\right]_{1:(k-1), :}\\
            &\left[\bmAb_{n}\right]_{(k+1):(N-1), :}
        \end{bmatrix}.
        \label{eq:icassp2608}
    \eeq
    The above two equations indicate that the $k$-th row of $\bmAb_{n}$ can be written as a linear combination of the remaining rows in $\bmAb_{n}$.

    On the other hand, due to the assumption $\bmA\cdot\bmone=\bmzero$, we have
    \beq
        \bmAb_{n}\cdot\bmone=-\bma_{n},
        \label{eq:icassp2609}
    \eeq
    where $\bma_{n}$ denotes the $n$-th column of $\bmA$. We can remove the $k$-th rows on both sides of \eqref{eq:icassp2609}:
    \beq
        \bmC_{kn}\cdot\bmone=-\bmd_{kn},
        \label{eq:icassp2610}
    \eeq
    where $\bmd_{kn}$ is a trimmed version of $\bma_{n}$, with the $k$-th entry removed.
    Similarly, by extracting the $k$-th row in \eqref{eq:icassp2609} and we have
    \beq
        \left[\bmAb_{n}\right]_{k, :}\bmone=-[\bmA]_{k, n}.
        \label{eq:icassp2611}
    \eeq

    Right-multiplying both sides of \eqref{eq:icassp2607} by the all-ones vector and using \eqref{eq:icassp2610} and \eqref{eq:icassp2611}, we obtain
    \beq
        \begin{split}
            \left[\bmAb_{n}\right]_{k, :}\cdot\bmone=\bmb^T\bmC_{kn}\cdot\bmone\\
            \Rightarrow [\bmA]_{k, n}= \bmb^T\bmd_{kn}.
        \end{split}
        \label{eq:icassp2612}
    \eeq

    Combining \eqref{eq:icassp2607} and \eqref{eq:icassp2612}, we have
    \beq
    \begin{split}
        \begin{bmatrix}
            [\bmA]_{k, n} & \left[\bmAb_{n}\right]_{k, :}
        \end{bmatrix}=\bmb^T
        \begin{bmatrix}
            \bmd_{kn} & \bmC_{kn}
        \end{bmatrix}\\
        \Rightarrow \begin{bmatrix}
            [\bmA]_{k, n} & \left[\bmAb_{n}\right]_{k, :}
        \end{bmatrix} \bmP=\bmb^T
        \begin{bmatrix}
            \bmd_{kn} & \bmC_{kn}
        \end{bmatrix}\bmP,
    \end{split}
    \label{eq:icassp2613}
    \eeq
    where $\bmP$ is a $N\times N$ permutation matrix:
    \beq
        \bmP=\begin{bmatrix}
            &\bmzero_{n-1}^T &1 &\bmzero_{N-n}^T\\
            &\bmI_{n-1} &\bmzero_{n-1} &\bmO_{(n-1)\times (N-n)}\\
            &\bmO_{(N-n)\times(n-1)} &\bmzero_{N-n} &\bmI_{N-n}
        \end{bmatrix}.
    \eeq
    One can verify that $\begin{bmatrix}
            [\bmA]_{k, n} & \left[\bmAb_{n}\right]_{k, :}
        \end{bmatrix} \bmP$ is the $k$-th row of $\bmA$ and that $\begin{bmatrix}
            \bmd_{kn} & \bmC_{kn}
        \end{bmatrix}\bmP$ is the matrix obtained from $\bmA$ by removing its $k$-th row. 
    Therefore, $\eqref{eq:icassp2613}$ indicates $\rank(\bmA)<N-1$, which contradicts the assumption that $\rank(\bmA)=N-1$.
    
\end{proof}

\begin{theorem}
\label{th:icassp2601}
    $\forall \bmA\in \mathbb{R}^{M\times N}$, where $M=N-1$, if $\rank(\bmA)=M$ and $\bmA\cdot \bmone=\bmzero$, then 
    \beq
        \forall n: \int \delta(\bmA\bmtheta)\prod_{n'\neq n} f_{n'}(\theta_{n'})\mathrm{d}\theta_{n'}\propto \prod_{n'=1}^N f_{n'}(\theta_n).
    \eeq
\end{theorem}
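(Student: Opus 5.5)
Fix $n$ and split $\bmA\bmtheta=\bma_n\theta_n+\bmAb_n\bmthetab_n$, where $\bmthetab_n\in\mathbb{R}^{N-1}$ collects all entries except $\theta_n$ and $\bma_n$ is the $n$-th column of $\bmA$. By Lemma~\ref{lemma:icassp2601}, $\bmAb_n$ is a square $(N-1)\times(N-1)$ matrix of full rank, hence invertible. This is the key structural fact: for fixed $\theta_n$, the Dirac delta $\delta(\bma_n\theta_n+\bmAb_n\bmthetab_n)$ is supported at exactly one point $\bmthetab_n^\star=-\bmAb_n^{-1}\bma_n\theta_n$ of the integration space.

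Next I identify this point. Since $\bmA\cdot\bmone=\bmzero$, we have $\bmAb_n\bmone=-\bma_n$ (this is \eqref{eq:icassp2609}). Therefore
\[
\bmthetab_n^\star=-\bmAb_n^{-1}\bma_n\theta_n=\bmAb_n^{-1}\bmAb_n\bmone\,\theta_n=\bmone\,\theta_n .
\]
In words, the delta forces every duplicated variable $\theta_{n'}$, $n'\neq n$, to equal $\theta_n$.

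Then I apply the change-of-variables property of the multivariate delta. For invertible $\bmAb_n$,
\[
\delta(\bmAb_n\bmthetab_n+\bma_n\theta_n)=\frac{1}{|\det\bmAb_n|}\,\delta(\bmthetab_n-\bmthetab_n^\star).
\]
This is obtained by substituting $\bmy=\bmAb_n\bmthetab_n$ in the integral. The sifting property then gives
\[
\int \delta(\bmA\bmtheta)\prod_{n'\neq n} f_{n'}(\theta_{n'})\,\mathrm{d}\theta_{n'}=\frac{1}{|\det\bmAb_n|}\prod_{n'\neq n} f_{n'}(\theta_n).
\]
The constant $1/|\det\bmAb_n|$ does not depend on $\theta_n$. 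The statement is read with $f_n(\theta_n)$ already included on the left, as in the marginal of \eqref{eq:GMM3}. Multiplying by $f_n(\theta_n)$ yields $\propto\prod_{n'=1}^N f_{n'}(\theta_n)$.

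The main obstacle is rigor in the delta manipulation. A product of distributions $\delta(\bmA\bmtheta)$ evaluated along a linear map needs justification, and sifting requires some regularity of the $f_{n'}$ (e.g.\ continuity at the point). I would handle this by interpreting $\delta(\bmA\bmtheta)$ as the limit of $\mathcal{N}(\bmA\bmtheta;\bmzero,\epsilon\bmI)$ as $\epsilon\to0$. After the substitution $\bmy=\bmAb_n\bmthetab_n$, the integral becomes a Gaussian convolution of the transformed product. This converges to its value at the point $\bmy=-\bma_n\theta_n$, with the Jacobian factor $|\det\bmAb_n|^{-1}$, for continuous integrable factors.
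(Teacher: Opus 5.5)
Your proof is correct and follows the paper's argument. Both split off the $n$-th column, invoke Lemma~\ref{lemma:icassp2601} to make $\bmAb_n$ invertible, change variables $\bmeta=\bmAb_n\bmthetab_n$ to pick up $1/|\det\bmAb_n|$, and use $\bmAb_n\bmone=-\bma_n$ to show the delta pins every $\theta_{n'}$ to $\theta_n$. Your observation that the left-hand side as written only yields $\prod_{n'\neq n}f_{n'}(\theta_n)$, so $f_n(\theta_n)$ must be included in the integrand as in \eqref{eq:ICASSP2611}, is right. It also corrects a slip in the paper's final line, which writes $\prod_{n'=1}^N$ where its own computation gives $\prod_{n'\neq n}$.
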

\begin{proof}
    Consider the term inside the Dirac delta function
    \beq
        \delta(\bmA\bmtheta)=\delta(\bmAb_{n}\bmthetab_{n}+\bma_n\theta_n),
    \eeq
    where $\bmAb_{n}$ denotes the $M\times M$ matrix obtained from $\bmA$ by removing its $n$-th column. Similarly, $\bmthetab_n$ is the vector obtained from $\bmtheta$ by removing its $n$-th entry. Furthermore, $\bma_{n}$ denotes the $n$-th column of $\bmA$ and $\theta_n$ denotes the $n$-th entry of $\bmtheta$.
    
    By Lemma~\ref{lemma:icassp2601}, $\bmAb_{n}$ is an invertible matrix. 
    Define $\bmeta=\bmAb_{n}\bmthetab_{n}$, and thus, $\mathrm{d} \bmthetab_{n}=\left|\det\left(\bmAb_{n}^{-1}\right)\right|\mathrm{d}\bmeta$. Therefore, the integral in the theorem can be rewritten as
    \beq
    \begin{split}
        &\int \delta(\bmA\bmtheta)\prod_{n'\neq n} f_{n'}(\theta_{n'})\mathrm{d}\theta_{n'}\\
        &=\frac{1}{\left|\det\left(\bmAb_{n}\right)\right|}\int \mathrm{d}\bmeta\, \delta(\bmeta+\bma_n\theta_n) \!\prod_{n'\neq n}\! f_{n'}\!\left(\left[\bmAb_{n}^{-1}\right]_{k(n'), :}\!\bmeta\right)\\
        &=\frac{1}{\left|\det\left(\bmAb_{n}\right)\right|} \prod_{n'\neq n} f_{n'}\left(-\left[\bmAb_{n}^{-1}\right]_{k(n'), :}\bma_{n}\theta_n\right),
    \end{split}
    \label{eq:icassp2616}
    \eeq
    where $k(n'):\left\{1, \dots, n-1, n+1, \dots, N\right\}\to\{1,\dots, N-1\}$ is an index transformation defined as
    \beq
        k(n')=\begin{cases}
            n' \; & \text{if $1\leq n'<n$}\\
            n'-1 \; &\text{if $n < n'\leq N$}
        \end{cases}.
    \eeq

    On the other hand, from the assumption $\bmA\cdot \bmone=\bmzero$, we have
    \beq
    \begin{split}
        &\bmA\cdot \bmone=\bmzero\\
        &\Rightarrow \bmAb_n\cdot\bmone=-\bma_n\\
        &\Rightarrow \bmone=-\bmAb_{n}^{-1}\bma_n\\
        &\Rightarrow \forall j\in\{1, \dots, N-1\} : \left[\bmAb_{n}^{-1}\right]_{j, :}\bma_{n}=-1.
    \end{split}
    \label{eq:icassp2618}
    \eeq
Substitute \eqref{eq:icassp2618} into the last line of \eqref{eq:icassp2616} yields
\beq
    \int \delta(\bmA\bmtheta)\prod_{n'\neq n} f_{n'}(\theta_{n'})\mathrm{d}\theta_{n'}= \frac{1}{\left|\det\left(\bmAb_{n}\right)\right|} \prod_{n'=1}^N f_{n'}(\theta_n).
\eeq

\end{proof}

Based on Theorem~\ref{th:icassp2601}, we can view \eqref{eq:firsteq} as the marginalized version of \eqref{eq:GMM3}, i.e.,
\beq
    \forall i\in\mathbb{N}\cap[1, N]:\, \prod_{n} f_n(\theta_i)\propto \int \delta(\bmA\bmtheta)\prod_n f_{n}(\theta_n) d\bmthetab_{i}, \label{eq:ICASSP2611}
\eeq
where $\bmthetab_{i}$ denotes an $(N-1)\times 1$ vector obtained by removing the $i$-th entry from $\bmtheta$.

The delta distribution $\delta(\bmA\bmtheta)$ can be interpreted as a noiseless linear measurement model:
\beq
    \bmy=\bmA\bmtheta+\bmv, \label{eq:GMM5}
\eeq
where $\bmy=\bmzero$ and $\bmv\sim\mathcal{N}(\bmzero, \diag(\begin{bmatrix}\epsilon_1 &\dots &\epsilon_N\end{bmatrix}))$, with infinitesimal $\epsilon_n$ for all $n$. The auxiliary noise variance $\epsilon_n$ can serve as a regularization term to prevent zero denominators that may occur due to poor initialization. In our simulations, we set $\epsilon_n=0$. The corresponding factor graph for \eqref{eq:GMM3} and \eqref{eq:GMM5} is illustrated in Figure~\ref{fig:GMMFG}.
\begin{figure}[t]
    \centering
    \includegraphics[width=0.48\textwidth]{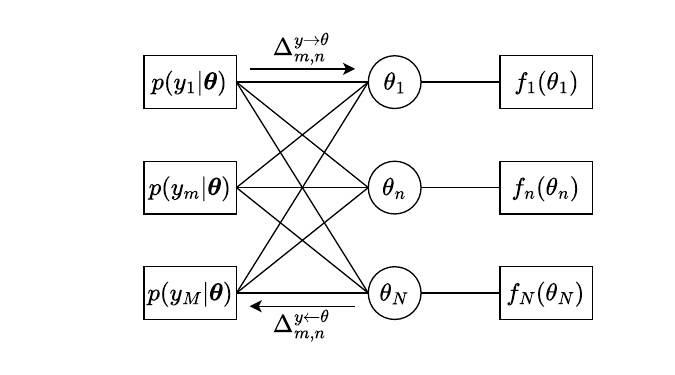}
    \vspace{-2mm}
    \caption{Factor Graph for VDBP} 
    \vspace{-2mm}
    \label{fig:GMMFG}
    \vspace{-4mm}
\end{figure}

Since each factor node $f_i(\theta_i)$ is connected to only one variable $\theta_i$, the message from factor $f_i$ to $\theta_i$ is always $f_i(\theta_i)$.
A tractable BP-based algorithm for estimating the statistics can be obtained by deriving the message update equations for $\Delta^{y\to \theta}_{m, n}$ and $\Delta^{y\leftarrow \theta}_{m, n}$.
These messages can also be interpreted as approximate probabilities: the factor-to-variable message $\Delta^{y\to \theta}_{m, n}$ is proportional to the approximate likelihood $p(y_m|\theta_n)$, while the variable-to-factor message $\Delta^{y\leftarrow\theta}_{m, n}$
is proportional to the approximate posterior $p(\theta_n|\bmyb_{m})$, where $\bmyb_{m}$ denotes all entries of $\bmy$ except the $m$-th one. 

In the following, we apply the BP update rules described in Section~\ref{sec:IntroBPEP}, together with the CLT, to derive these messages.

\subsection{Derivation of the VDBP}

We first derive the factor-to-variable message $\Delta^{y\to \theta}_{m, n}(\theta_n)$. Since BP is an iterative method, we assume that the mean and variance of the variable-to-factor message $\Delta^{y\leftarrow \theta}_{m, n}(\theta_n)$ are given by $\mu^{\theta}_{m, n}=\E_{\Delta^{y\leftarrow \theta}_{m, n}}[\theta_n]$ and $\tau^{\theta}_{m, n}=\var_{\Delta^{y\leftarrow \theta}_{m, n}}[\theta_n]$, respectively.

Define $z_{m}=\sum_{n}a_{mn}\theta_n$. Then the conditional PDF $p(z_{m}|\theta_n, \{\Delta^{y\leftarrow \theta}_{m, n}\}_{n=1}^{N})$ can be approximated as Gaussian according to the CLT
\begin{align}
    &p(z_{m}|\theta_{n}, \{\Delta^{y\leftarrow \theta}_{m, n}\}_{n=1}^{N})\nonumber\\
    &=\int \delta(z_m-a_{mn}\theta_n-\sum_{n'\neq n}a_{mn'}\theta_{n'}) \prod_{n'\neq n}\Delta^{y\leftarrow\theta}_{m, n'}(\theta_{n'})d\theta_{n'} \nonumber\\
    &\simeq \mathcal{N}(z_m|a_{mn}\theta_n+\mu^{p}_{n, m}, \tau^{p}_{n, m}), \label{eq:ICASSP2613}
\end{align}
where $\{\Delta^{y\leftarrow \theta}_{m, n}\}_{n=1}^{N}$ denotes the collection of all variable-to-factor messages. In the last line of \eqref{eq:ICASSP2613}, the CLT is applied to approximate the interference term $\sum_{n'\neq n}a_{mn'}\theta_{n'}$ as Gaussian with mean and variance $\mu^{p}_{n, m}$, $\tau^{p}_{n, m}$:
\beq
    \begin{split}
        \mu^{p}_{n, m}=\sum_{n'\neq n}a_{mn'}\mu^{\theta}_{m, n'}\\
        \tau^{p}_{n, m}=\sum_{n'\neq n}|a_{mn'}|^2\tau^{\theta}_{m, n'}.
        \label{eq:ICASSP2614}
    \end{split}
\eeq

Based on \eqref{eq:ICASSP2606}, we combine $p(z_m|\theta_n, \{\Delta^{y\leftarrow \theta}_{m, n}\}_{n=1}^{N})$ with the measurement model $p(y_m|z_m)=\mathcal{N}(y_m|z_m, \epsilon_m)$ and integrate over $z_m$ to obtain the factor-to-variable message $\Delta^{y\to \theta}_{m, n}(\theta_n)$, which corresponds to the approximate likelihood $p(y_m|\theta_n)$
\beq
    \Delta^{y\to \theta}_{m, n}(\theta_n)\propto \mathcal{N}\left(\theta_n\left|\frac{y_m-\mu^{p}_{n, m}}{a_{mn}}, \frac{\tau^{p}_{n, m}+\epsilon_m}{|a_{mn}|^2}\right.\right).
    \label{eq:ICASSP2615}
\eeq

We now derive the variable-to-factor message $\Delta^{y\leftarrow \theta}_{m, n}$ using \eqref{eq:EPv2f}:
\beq
\begin{split}
\Delta^{y\leftarrow \theta}_{m, n}(\theta_n)\propto f_{n}(\theta_n)\prod_{j\neq m}p (y_j|\theta_n)\\
=f_{n}(\theta_n)\mathcal{N}(\theta_n|\mu^{r}_{m, n}, \tau^{r}_{m, n}),
\end{split}
\label{eq:icassp2544}
\eeq
where $\mu^r_{m, n}$ and $\tau^r_{m, n}$ follow from the Gaussian reproduction property \cite{8496782}:
\beq
    \begin{split}
        &\tau^{r}_{m, n}=\left(\sum_{j\neq m}\frac{|a_{jn}|^2}{\tau^{p}_{n, j}+\epsilon_j}\right)^{-1}\\
        &\mu^{r}_{m, n}=\tau^{r}_{m, n}\left(\sum_{j\neq m}\frac{a_{jn}\left(y_j-\mu^{p}_{n, j}\right)}{\tau^{p}_{n, j}+\epsilon_j}\right).
    \end{split}
\eeq
From \eqref{eq:ICASSP2613}–\eqref{eq:ICASSP2614}, we observe that under the CLT approximation only the first two moments of $\Delta^{y\leftarrow \theta}_{m, n}$ are required:
\beq
    \begin{split}
        \mu^{\theta}_{m, n}=\E_{\Delta^{y\leftarrow \theta}_{m, n}}[\theta_n]\\
        \tau^{\theta}_{m, n}=\var_{\Delta^{y\leftarrow \theta}_{m, n}}[\theta_n],
    \end{split}
    \label{eq:ICASSP2618}
\eeq
where $\Delta^{y\leftarrow \theta}_{m, n}$ is defined in \eqref{eq:icassp2544}. This concludes our discussion for a general factor $f_{n}$ in \eqref{eq:firsteq}. Next, we consider a representative case in which $f_{n}$ follows a GMM that arises in bilinear joint estimation and detection problems.

\subsubsection{GMM Factors}
Assume that the factors in \eqref{eq:firsteq} have the following structure:
\beq
    f_n(\theta_n)\propto \sum_{s_n} p_{s_n} \mathcal{N}(\theta_n|\mu_{s_n}, \tau_{s_n}). \label{eq:ICASSP2619}
\eeq
The only part of VDBP that depends on the factor structure is the update of the mean and variance of $\Delta^{y\leftarrow\theta}_{m, n}$ in \eqref{eq:ICASSP2618}.
Substituting \eqref{eq:ICASSP2619} into \eqref{eq:icassp2544} and applying the Gaussian Reproduction Lemma \cite{8496782}, we obtain
\beq
\begin{split}
    &\Delta^{y\leftarrow\theta}_{m, n}(\theta_n)
    \propto \sum_{s_n}\pi_n(s_n|\mu^{r}_{m, n}, \tau^{r}_{m, n})  \\
    &\cdot \mathcal{N}(\theta_n|\mu^{x|s}_{m, n}(s_n), \tau^{x|s}_{m, n}(s_n)),
\end{split}
\label{eq:ICASSP2620}
\eeq
where $\pi_n(s_n|\mu^{r}_{m, n}, \tau^{r}_{m, n})$ is the normalized categorical distribution defined as
\beq
    \pi_n(s_n|\mu^{r}_{m, n}, \tau^{r}_{m, n})\propto p_{s_n} \mathcal{N}(0|\mu_{s_n}-\mu^{r}_{m, n}, \tau^{r}_{m, n}+\tau_{s_n}).
    \label{eq:icassp2638trgdfg}
\eeq
The corresponding conditional mean and variance are given by
\beq
    \begin{split}
        &\mu^{x|s}_{m, n}(s_n)=\frac{\tau^{r}_{m, n}\mu_{s_n}+\tau_{s_n}\mu^{r}_{m, n}}{\tau^{r}_{m, n}+\tau_{s_n}}\\
        &\tau^{x|s}_{m, n}(s_n)=\frac{\tau^{r}_{m, n}\tau_{s_n}}{\tau^{r}_{m, n}+\tau_{s_n}}.
    \end{split}
\eeq
Therefore, the message $\Delta^{y\leftarrow \theta}_{m, n}(\theta_n)$ in \eqref{eq:ICASSP2620} is itself a GMM with the same number of components as the factor $f_{n}$.
By exchanging the order of integration and summation, the mean of $\Delta^{y\leftarrow\theta}_{m, n}(\theta_n)$ can be computed as
\beq
    \mu^{\theta}_{m, n}=\E_{\pi_n(s_n|\mu^{r}_{m, n}, \tau^{r}_{m, n})}\left[\mu^{x|s}_{m, n}(s_n)\right],
    \label{eq:icassp2548}
\eeq
while its variance is given by
\beq
    \tau^{\theta}_{m, n}=\E_{\pi_n(s_n|\mu^{r}_{m, n}, \tau^{r}_{m, n})}\left[\tau^{x|s}_{m, n}(s_n)+\mu^{x|s}_{m, n}(s_n)^2\right]-(\mu^{\theta}_{m, n})^2.
    \label{eq:icassp2549}
\eeq

\subsection{Approximated Statistics}
Recall the relation in \eqref{eq:ICASSP2611}, which indicates that the  univariate factorized PDF in \eqref{eq:firsteq} can be approximated by the marginal distribution of \eqref{eq:GMM3}. According to the BP rule in \eqref{eq:ICASSP2607}, the marginal distribution of a variable can be approximated as the product of all incoming messages to that variable. Since the message $\Delta^{y\to \theta}_{m, n}$ in \eqref{eq:ICASSP2615} follows a Gaussian distribution, we define the extrinsic mean and variance $\mu^{r}_{n}$ and $\tau^{r}_{n}$ such that
\beq
    \mathcal{N}(\theta_n| \mu^{r}_{n}, \tau^{r}_{n}):\propto \prod_{m} \Delta^{y\to \theta}_{m, n}(\theta_n). \label{eq:ICASSP2625}
\eeq
As a result, the extrinsic mean and variance are
\beq
\begin{split}
    &\tau^{r}_{n}=\left(\sum_{m}\frac{|a_{mn}|^2}{\tau^{p}_{n, m}+\epsilon_m}\right)^{-1}\\
    &\mu^{r}_{n}=\tau^{r}_{n}\left(\sum_{m}\frac{a_{mn}\left(y_m-\mu^{p}_{n, m}\right)}{\tau^{p}_{n, m}+\epsilon_m}\right).    
\end{split}
\eeq
By combining the extrinsic distribution in \eqref{eq:ICASSP2625} with the factor $f_{n}(\theta_n)$, we obtain the approximate marginal of \eqref{eq:GMM3} as
\beq
    b_{\theta_n}(\theta_n):\propto f_{n}(\theta_n)\mathcal{N}(\theta_n|\mu^{r}_{n}, \tau^{r}_{n}). \label{eq:ICASSP2627}
\eeq
According to \eqref{eq:ICASSP2611}, \eqref{eq:ICASSP2627} can be regarded as an approximate form of \eqref{eq:firsteq}. Without loss of generality, we restrict our attention to the approximate mean and variance.

Each $\theta_n$ yields its own belief of the marginal mean and variance, denoted by $\mu^{\theta}_{n}$ and $\tau^{\theta}_{n}$:
\beq
    \begin{split}
        \mu^{\theta}_{n}=\E_{b_{\theta_n}}[\theta_n]\\
        \tau^{\theta}_{n}=\var_{b_{\theta_n}}[\theta_n].
    \end{split}
    \label{eq:ICASSP2628}
\eeq
Finally, we combine these beliefs to obtain the approximate mean and variance of $p(\theta)$ in \eqref{eq:firsteq}, denoted by $\mu^{\thetah}$ and $\tau^{\thetah}$:
\beq
\begin{split}
    &\mu^{\thetah}=\left(\sum_{n}\frac{1}{\tau^{\theta}_{n}} \right)^{-1}\left(\sum_{n}\frac{\mu^{\theta}_{n}}{\tau^{\theta}_{n}}\right)\\
    &\tau^{\thetah}=\min(\tau^{\theta}_{1},\dots, \tau^{\theta}_{N}).
\end{split}
\eeq
The computational complexity of the proposed algorithm depends on the number of nonzero elements in $\bmA$. When a trimmed Hadamard matrix is used, the overall computational complexity is $O(N^2)$. The VDBP algorithm is summarized in Algorithm \ref{algo:wsa2501}.

\floatstyle{spaceruled}
\restylefloat{algorithm}
\begin{algorithm}[t]
\caption{VDBP for GMM Factors}\label{algo:wsa2501}
\begin{algorithmic}[1]
\State Initialize $\forall m, n: \mu^{\theta}_{m, n}=0, \tau^{\theta}_{m, n}=1$
\Repeat
\State $\forall m:\, \mu^{p}_{m}=\sum_{n} a_{mn} \mu^{\theta}_{m, n}$
\State $\forall m:\, \tau^{p}_{m}=\sum_{n} |a_{mn}|^2\tau^{\theta}_{m, n}$
\State $\forall n, m:\, \mu^{p}_{n, m}=\mu^{p}_{m}-a_{mn} \mu^{\theta}_{m, n}$
\State $\forall n, m:\, \tau^{p}_{n, m}=\tau^{p}_{m}-|a_{mn}|^2\tau^{\theta}_{m, n}$
\State $\forall n:\, \tau^{r}_{n}=\left(\sum_{m}\frac{|a_{mn}|^2}{\tau^{p}_{n, m}+\epsilon_{m}}\right)^{-1}$
\State $\forall n:\, \mu^{r}_{n}=\tau^{r}_{n}\left(\sum_{m}\frac{a_{mn}^*\left(y_m-\mu^{p}_{n, m}\right)}{\tau^{p}_{n, m}+\epsilon_m}\right)$
\State $\forall m, n:\, \tau^{r}_{m, n}=\left(\tau^{r\,-1}_{n}-\frac{|a_{mn}|^2}{\tau^{p}_{n, m}+\epsilon_{m}}\right)^{-1}$
\State $\forall m, n:\, \mu^{r}_{m, n}=\tau^{r}_{m, n}\left(\frac{\mu^{r}_{n}}{\tau^{r}_{n}}-\frac{a_{mn}^*\left(y_m-\mu^{p}_{n, m}\right)}{\tau^{p}_{n, m}+\epsilon_m}\right)$
\State $\forall m, n:\, \text{Compute normalized }\pi_n(s_n)$ via \eqref{eq:icassp2638trgdfg}
\State $\forall m, n:$ Compute $\mu^{\theta}_{m, n}$, $\tau^{\theta}_{m, n}$ via \eqref{eq:icassp2548}, \eqref{eq:icassp2549}
\Until{Convergence}
\end{algorithmic}
\end{algorithm}

\section{Preliminaries for Expectation Propagation}
In EP, a message does not necessarily correspond to a normalizable function and may even be non-integrable.
Let $\mathcal{UN}(\theta|\mu, \tau)$ denote an unnormalized Gaussian (ignoring the $\sqrt{2\pi\tau}$ normalization factor), 
\beq
    \mathcal{UN}(\theta|\mu, \tau)=\exp{-\frac{\left(\theta-\mu\right)^2}{2\tau}}.
    \label{eq:icassp2646}
\eeq

We define a Gaussian distribution in natural-parameter form as 
\beq
    \mathcal{M}(\theta|\nu, \xi)=\frac{\sqrt{\xi}}{\sqrt{2\pi}}\exp{\left(-\frac{\xi}{2}\theta^{2}+\nu\theta-\frac{\nu^2}{2\xi}\right)}.
\eeq
Similarly, the unnormalized Gaussian in natural-parameter form is defined as
\beq
    \mathcal{UM}(\theta|\nu, \xi)=\exp{\left(-\frac{\xi}{2}\theta^{2}+\nu\theta-\frac{\nu^2}{2\xi}\right)}.
\eeq
One can verify that if $\mathcal{UN}(\theta|\mu, \tau)=\mathcal{UM}(\theta|\nu, \xi)$, or if $\mathcal{N}(\theta|\mu, \tau)=\mathcal{M}(\theta|\nu, \xi)$, then
\beq
\begin{split}
    \nu=\frac{\mu}{\tau};\\
    \xi=\frac{1}{\tau}.
\end{split}
\label{eq:Icassp2649}
\eeq

We can then obtain the unnormalized Gaussian reproduction lemma. 
For all $\mu\in \mathbb{R}$ and $\tau\in \mathbb{R}/\{0\}$, 
\beq
\begin{split}
    &\mathcal{UN}(\theta|\mu_1, \tau_1)\mathcal{UN}(\theta|\mu_2, \tau_2)\\
    =&\mathcal{UN}(0|\mu_1-\mu_2, \tau_1+\tau_2)\mathcal{UN}\left(\theta\left|\frac{\tau_2\mu_1+\tau_1\mu_2}{\tau_1+\tau_2}, \frac{\tau_1\tau_2}{\tau_1+\tau_2}\right.\right).
\end{split}
\eeq
Similarly, apply the Gaussian reproduction lemma to unnormalized Gaussians in natural-parameter form, and using analytic continuation to cover the case $\xi=0$ (i.e., $\tau\to \infty$), we have  $\forall \nu, \xi\in \mathbb{R}$,
\beq
    \begin{split}
    &\mathcal{UM}(\theta|\nu_1, \xi_1)\mathcal{UM}(\theta|\nu_2, \xi_2)\!=\!\mathcal{UN}(\theta|\mu_1, \tau_1)\mathcal{UN}(\theta|\mu_2, \tau_2)\\
    =&\mathcal{UM}\left(0\left|\frac{\xi_2\nu_1\!-\!\xi_1\nu_2}{\xi_1+\xi_2}, \frac{\xi_1\xi_2}{\xi_1\!+\!\xi_2}\right.\right)\mathcal{UM}\left(\theta\left|\nu_1\!+\!\nu_2, \xi_1+\xi_2\right.\right),
    \end{split}
    \label{eq:ICASSP2632}
\eeq
where 
\beq
\begin{split}
    \nu_1=\frac{\mu_1}{\tau_1}, \, \nu_2=\frac{\mu_2}{\tau_2};\\
    \xi_1=\frac{1}{\tau_1}, \, \xi_2=\frac{1}{\tau_2}.
\end{split}
\eeq

\subsection{Factor Graph of the Given Problem in EP}
Now we investigate EP-based algorithms.  Consider the univariate factored PDF given in \eqref{eq:firsteq}. We define the message from the $n$-th factor to $\theta$ as $\Delta^{f\to \theta}_{n}(\theta)$ and the message from $\theta$ to the $n$-th factor as $\Delta^{f\leftarrow \theta}_{n}(\theta)$. The corresponding factor graph is shown in Fig. \ref{fig:GMMEP}.
\begin{figure}[t]
    \centering
    \includegraphics[width=0.48\textwidth]{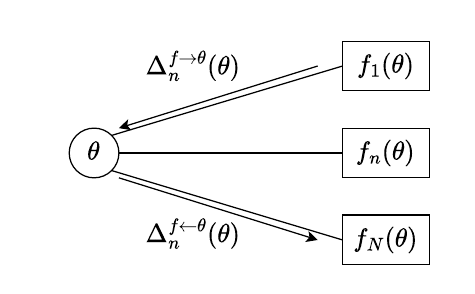}
    \vspace{-2mm}
    \caption{Factor Graph for EP} 
    \vspace{-2mm}
    \label{fig:GMMEP}
    \vspace{-4mm}
\end{figure}

\section{Persistent EP}
In EP, the messages are not required to be integrable.
Unlike the Gaussian distribution, the unnormalized Gaussian defined in \eqref{eq:icassp2646} allows negative values of $\tau$. However, the point $\tau=0$ is an essential discontinuity of \eqref{eq:icassp2646}.
Persistent EP operates under a sequential update order. Specifically, the messages to and from one factor are updated at a time, and the latest messages are immediately used when updating the messages associated with the next factor. In the following, we focus on deriving the messages to and from factor $f_{n}$.

Persistent EP does not actively prevent the formation of non-integrable beliefs. Instead, it handles non-integrable beliefs in a passive manner: if the belief at the factor $f_{n}(\theta)$ is non-integrable, the update of the message $\Delta^{f\to \theta}_{n}(\theta)$ is skipped.

Based on the EP rules, the message from variable to factor $f_{n}(\theta)$ can be computed as
\beq
    \Delta^{f\leftarrow \theta}_{n}(\theta)=\mathcal{UN}(\theta|\mu^{f\leftarrow \theta}_{n}, \tau^{f\leftarrow \theta}_{n})\propto \prod_{n'\neq n} \Delta^{f\to \theta}_{n'}(\theta), 
    \label{eq:icassp2647}
\eeq
where 
\beq
\begin{split}
    \tau^{f\leftarrow \theta}_{n}=\left(\sum_{n'\neq n} \frac{1}{\tau^{f\to \theta}_{n'}}\right)^{-1}\\
    \mu^{f\leftarrow \theta}_{n}=\tau^{f\leftarrow \theta}_{n}\left(\sum_{n'\neq n}\frac{\mu^{f\to \theta}_{n'}}{\tau^{f\to \theta}_{n'}}\right).
\end{split}
\label{eq:icassp2654t}
\eeq
Unlike standard EP algorithm, a checking mechanism is introduced before computing the belief at factor $f_{n}$. Specifically, we evaluate whether the integral 
\[
    \int f_{n}(\theta)\Delta^{f\leftarrow \theta}_{n}(\theta)\mathrm{d}\theta
\]
is finite. The check passes if the integral exist in $\mathbb{R}$.

If the check fails, the update of $\Delta^{f\to \theta}_{n}(\theta)$ is skipped in the current iteration, and the algorithm proceeds to the next factor $f_{k}$, where $k=(n\mod N)+1$, and repeats from \eqref{eq:icassp2647} with index $n$ replaced by $k$.

If the check passes, the belief at $f_{n}$ can be computed as
\beq
    b_{f_{n}}(\theta)\propto f_{n}(\theta)\Delta^{f\leftarrow \theta}_{n}(\theta), 
    \label{eq:icassp2649}
\eeq
which is guaranteed to have a finite integral value. The corresponding belief mean and variance are
\beq
\begin{split}
    \mu^{b_{f}}_{n}=\E_{b_{f_{n}}}[\theta],\\
    \tau^{b_{f}}_{n}=\var_{b_{f_{n}}}[\theta].
\end{split}
\label{eq:icassp26566}
\eeq
According to the EP update rule, the message from factor $f_{n}$ to $\theta$ is
\beq
    \Delta^{f\to \theta}_{n}(\theta)=\mathcal{UN}(\theta|\mu^{f\to \theta}_{n}, \tau^{f\to \theta}_{n})\propto\frac{\mathcal{UN}(\theta| \mu^{b_{f}}_{n}, \tau^{b_{f}}_{n})}{\Delta^{f\leftarrow\theta}_{n}(\theta)},
    \label{eq:icassp2651}
\eeq
where
\beq
    \begin{split}
        &\tau^{f\to \theta}_{n}=\left(\frac{1}{\tau^{b_{f}}_{n}}-\frac{1}{\tau^{f\leftarrow \theta}_{n}}\right)^{-1}\\
        &\mu^{f\to \theta}_{n}=\tau^{f\to \theta}_{n}\left(\frac{\mu^{b_f}_{n}}{\tau^{b_{f}}_{n}}-\frac{\mu^{f\leftarrow \theta}_{n}}{\tau^{f\leftarrow \theta}_{n}}\right).
    \end{split}
    \label{eq:icassp2658}
\eeq
After the final iteration, the Gaussian approximation of the univariate factored PDF is the belief at variable $\theta$:
\beq
    b_{\theta}(\theta)\propto \prod_{n}\Delta^{f\to \theta}_{n}(\theta).
    \label{eq:icassp2653}
\eeq
Therefore, the mean and variance of $b_{\theta}(\theta)$ are the approximated mean and variance of the univariate factored PDF:
\beq
\begin{split}
    \tau^{b_{\theta}}=\left(\sum_{n} \frac{1}{\tau^{f\to \theta}_{n}}\right)^{-1}\\
    \mu^{b_{\theta}}=\tau^{b_{\theta}}\left(\sum_{n}\frac{\mu^{f\to \theta}_{n}}{\tau^{f\to \theta}_{n}}\right).
\end{split}
\label{eq:icassp2654}
\eeq
The above procedure is then repeated for the next factor $f_{k}$, where $k=(n\mod N)+1$. Since a sequential update strategy is used, $b_{\theta}(\theta)$ is updated by \eqref{eq:icassp2653} whenever a factor-to-variable message is updated. Thus, \eqref{eq:icassp2653} serves as the definition for $b_{\theta}(\theta)$ throughout the algorithm.
\begin{proposition}
\label{prop:icassp2601}
    Assume that $b_{\theta}(\theta)$ is defined as in \eqref{eq:icassp2653}. If the algorithm in \eqref{eq:icassp2647}-\eqref{eq:icassp2654} is initialized with messages such that $b_{\theta}(\theta)$ is integrable, then the integral of $b_{\theta}(\theta)$ remains finite throughout all iterations.
\end{proposition}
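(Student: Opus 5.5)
We argue by induction over the sequential updates, tracking the precision of $b_{\theta}$. All factor-to-variable messages are unnormalized Gaussians. By \eqref{eq:icassp2654}, $b_{\theta}(\theta)\propto\mathcal{UN}(\theta|\mu^{b_\theta},\tau^{b_\theta})$ with precision $1/\tau^{b_\theta}=\sum_{n}1/\tau^{f\to\theta}_{n}$. Such a function has a finite integral if and only if this total precision is strictly positive. So the claim reduces to the following invariant: if $\sum_{n}1/\tau^{f\to\theta}_{n}>0$ holds before a step of the algorithm, it still holds after the step. The base case is exactly the initialization hypothesis.

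Fix the step associated with factor $f_n$. If the integrability check fails, $\Delta^{f\to\theta}_{n}$ is not modified, so $b_{\theta}$ is unchanged and the invariant holds trivially. If the check passes, only $\Delta^{f\to\theta}_{n}$ changes. Let $\xi^{f\leftarrow\theta}_{n}=1/\tau^{f\leftarrow\theta}_{n}=\sum_{n'\neq n}1/\tau^{f\to\theta}_{n'}$ from \eqref{eq:icassp2654t}. This quantity is computed from the other messages only, so it is the same before and after the update. By \eqref{eq:icassp2658}, the new message precision is $1/\tau^{f\to\theta}_{n}=1/\tau^{b_f}_{n}-\xi^{f\leftarrow\theta}_{n}$. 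The new total precision is therefore
\[
\xi^{f\leftarrow\theta}_{n}+\frac{1}{\tau^{b_f}_{n}}-\xi^{f\leftarrow\theta}_{n}=\frac{1}{\tau^{b_f}_{n}}.
\]
The same cancellation in the natural parameters shows that the new $b_{\theta}$ is exactly $\mathcal{UN}(\theta|\mu^{b_f}_{n},\tau^{b_f}_{n})$. This is the usual EP moment-matching identity, and it follows from \eqref{eq:ICASSP2632} with analytic continuation, which covers the cases where the cavity precision is zero or negative.

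It remains to show $\tau^{b_f}_{n}\in(0,\infty)$. Since the check passed, $b_{f_n}\propto f_n\Delta^{f\leftarrow\theta}_{n}$ in \eqref{eq:icassp2649} is a nonnegative function with a finite integral. Assuming it is not almost everywhere zero, it normalizes to a genuine probability density. A density that is not a point mass has strictly positive variance, so the new precision is positive.

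The main obstacle is finiteness of that variance, because a finite integral alone does not guarantee a finite second moment. I would first try to show that the integrability check already forces finite moments. For GMM factors, $f_n\,\mathcal{UN}$ is a finite sum of terms of the form $\mathcal{UN}(\cdot)$ by the reproduction lemma. Its integral is finite only if every resulting precision $1/\tau_{s_n}+1/\tau^{f\leftarrow\theta}_{n}$ is positive, and in that case all moments are finite. For general factors, I would state explicitly that the check is understood to require finite mean and variance, which \eqref{eq:icassp26566} already needs in order for the update to be well defined. With either route, the new total precision $1/\tau^{b_f}_{n}$ is positive and finite, which closes the induction.
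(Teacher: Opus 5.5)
Your proof is correct and follows the paper's argument. Both proceed by induction over sub-iterations: a skipped update leaves $b_{\theta}$ unchanged, and a passed check makes the new $b_{\theta}$ equal to $\Delta^{f\to\theta}_{n}\Delta^{f\leftarrow\theta}_{n}=\mathcal{UN}(\theta|\mu^{b_f}_{n},\tau^{b_f}_{n})$, because the cavity precision cancels. Your added check that $\tau^{b_f}_{n}\in(0,\infty)$, via the GMM structure or an explicit finite-moment requirement in the check, is a worthwhile refinement: the paper only asserts $\tau^{b_f}_{n}\geq 0$, treats $\tau^{b_f}_{n}=0$ as a Dirac delta, and silently assumes the variance in \eqref{eq:icassp26566} is finite.
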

\begin{proof}
    We prove the proposition by mathematical induction over the sub-iterations.
    The $i$-th sub-iteration is defined as the update in which only the messages to and from the $n$-th factor are considered, where $n=[(i-1)\mod N]+1$ . An iteration therefore consists of $N$ sub-iterations. 

    By assumption, the belief $b_{\theta}(\theta)$ is initialized to be integrable. 

    For clarity, we denote $\forall k=\{1, \dots, N\}: \Delta^{f\leftarrow \theta}_{k; (i)}(\theta)$ and $\Delta^{f\to \theta}_{k; (i)}(\theta)$ as the messages to and from factor $f_{k}(\theta)$ at the end of sub-iteration $i$. Similarly, let $b_{\theta; (i)}(\theta)$ denote the variable-level belief at the end of sub-iteration $i$. We also denote $\mu^{b_{f}}_{n;(i)}$ and $\tau^{b_{f}}_{n;(i)}$ as the mean and variance of the factor-level belief $b_{f_{n}}(\theta)$ at the end of iteration $i$.
    
    Assume that at the end of sub-iteration $i$, the belief $b_{\theta;(i)}(\theta)$ is integrable. At sub-iteration $i+1$, the messages to and from factor $f_{m}(\theta)$ are updated, where $m=(i\mod N)+1$. If the factor-level belief computed in \eqref{eq:icassp2649}
    is not integrable, i.e., 
    \beq
        \int f_{m}(\theta)\Delta^{f\leftarrow \theta}_{m; (i+1)}(\theta)\mathrm{d}\theta=\infty, 
    \eeq
    then the update of $\Delta^{f\to \theta}_{m}(\theta)$ is skipped and 
    \beq
    \Delta^{f\to \theta}_{m; (i+1)}(\theta)=\Delta^{f\to \theta}_{m; (i)}(\theta).
    \eeq
    Consequently, 
    \beq
        b_{\theta; (i+1)}(\theta)=b_{\theta; (i)}(\theta),
    \eeq
    which remains integrable. 

    If instead the factor-level belief computed in \eqref{eq:icassp2649} is integrable, i.e., 
    \beq
        \int f_{m}(\theta)\Delta^{f\leftarrow \theta}_{m; (i+1)}(\theta)\mathrm{d}\theta<\infty, 
    \eeq
    then by \eqref{eq:icassp2651}, we have
    \beq
        \Delta^{f\to \theta}_{m; (i+1)}(\theta)\Delta^{f\leftarrow \theta}_{m; (i+1)}(\theta)=\mathcal{UN}(\theta| \mu^{b_{f}}_{m; (i+1)}, \tau^{b_{f}}_{m;(i+1)}),
        \label{eq:icassp2657}
    \eeq
    where $\tau^{b_{f}}_{m;(i+1)}\geq 0$. 
    Since $\tau^{b_{f}}_{m;(i+1)}\geq 0$, we can verify \eqref{eq:icassp2657} has finite integral (if $\tau^{b_{f}}_{m;(i+1)}= 0$, then \eqref{eq:icassp2657} is Dirac delta function). 

    Finally, comparing \eqref{eq:icassp2647} and \eqref{eq:icassp2653}, we have
    \beq
        b_{\theta;(i+1)}(\theta)=\Delta^{f\to \theta}_{m; (i+1)}(\theta)\Delta^{f\leftarrow \theta}_{m; (i+1)}(\theta).
    \eeq
    Thus, $b_{\theta;(i+1)}(\theta)$ is also integrable.

    By induction, $b_{\theta}(\theta)$ has a finite integral at every sub-iteration.
\end{proof}

The persistent EP algorithm for GMM factors is concluded in Algorithm~\ref{algo:icassp2602}. Due to the sequential update strategy, line \ref{PEPline4} of Algorithm \ref{algo:icassp2602} has computational complexity $O(1)$, since only a single term corresponding to the most recently updated message $\Delta^{f\to \theta}_{k}$ is updated when updating $b_{\theta}(\theta)$, where $k=[(n-2)\mod N]+1$.

\floatstyle{spaceruled}
\restylefloat{algorithm}
\begin{algorithm}[t]
\caption{Persistent EP for GMM Factors}\label{algo:icassp2602}
\begin{algorithmic}[1]
\State Initialize $\forall n: \mu^{f\to \theta}_{n}=0, \tau^{f\to \theta}_{n}=1$
\Repeat
\State [For every $n\in [1,\dots, N]$, repeat the following]
\State Update $\tau^{b_{\theta}}$ and $\mu^{b_{\theta}}$ by \eqref{eq:icassp2654} \label{PEPline4}
\State $\tau^{f\leftarrow \theta}_{n}=\frac{\tau^{f\to\theta}_{n}\tau^{b_{\theta}}}{\tau^{f\to\theta}_{n}-\tau^{b_{\theta}}}$
\State $\mu^{f\leftarrow \theta}_{n}=\frac{\tau^{f\to\theta}_{n}\mu^{b_{\theta}}-\tau^{b_{\theta}}\mu^{f\to\theta}_{n}}{\tau^{f\to\theta}_{n}-\tau^{b_{\theta}}}$
\If {$b_{f_{n}}(\theta)$ in \eqref{eq:icassp2649} is integrable}
\State Update $\mu^{b_{f}}_{n}$ and $\tau^{b_{f}}_{n}$ by \eqref{eq:icassp2649} - \eqref{eq:icassp26566}
\State Update $\tau^{f\to \theta}_{n}$ and $\mu^{f\to \theta}_n$ by \eqref{eq:icassp2658}
\EndIf

\Until{Convergence}
\State Output $\mu^{b_{\theta}}$ and $\tau^{b_{\theta}}$ as approximated mean and variance
\end{algorithmic}
\end{algorithm}

\section{Analytic Continuation EP}

To avoid non-integrable beliefs that prevent the algorithm from progressing, we adopt the technique of analytic continuation proposed in \cite{zhao2025expectationsexpectationpropagation}. While persistent EP handles non-integrable beliefs in a passive manner, Analytic Continuation EP (ACEP) actively prevents non-integrable factor-level belief at the subsequent factor. By expressing Gaussian distributions in terms of their natural parameters as in \eqref{eq:Icassp2649}, we have a removable discontinuity at $\xi=0$ in $\mathcal{UM}$ if we discard the constant term $-\frac{\nu^2}{\xi}$, and the essential discontinuity $\tau=0$ in $\mathcal{UN}$ becomes $\xi\to \pm \infty$. Consequently, it is more convenient in this section to represent Gaussian distributions using the natural parameters $(\nu, \xi)$ rather than the mean–variance pair $(\mu, \tau)$.  

Based on EP, the message from $\theta$ to the $n$-th factor is updated as
\beq
    \Delta^{f\leftarrow \theta}_{n}(\theta)=\mathcal{UM}(\theta|\nu^{f\leftarrow \theta}_{n}, \xi^{f\leftarrow \theta}_{n})\propto \prod_{n'\neq n} \Delta^{f\to \theta}_{n'}(\theta),
    \label{eq:icassp2665}
\eeq
where 
\beq
    \begin{split}
        \nu^{f\leftarrow\theta}_{n}=\sum_{n'\neq n}\nu^{f\to\theta}_{n'};\\
        \xi^{f\leftarrow\theta}_{n}=\sum_{n'\neq n}\xi^{f\to\theta}_{n'}.
    \end{split}
    \label{eq:icassp2666}
\eeq
The belief at $f_{n}$ is then computed as
\beq
    b_{f_{n}}(\theta)\propto f_{n}(\theta)\Delta^{f\leftarrow \theta}_{n}(\theta).
    \label{eq:ICASSP2636true}
\eeq
When the Gaussian family is chosen as the projection family in EP, minimizing the Kullback–Leibler divergence is equivalent to matching the mean and variance. Accordingly, we define the approximated posterior mean and variance of the belief at factor $f_n$ as
\beq
\begin{split}
    \mu^{b_{f}}_{n}=\E_{b_{f_{n}}}[\theta],\\
    \tau^{b_{f}}_{n}=\var_{b_{f_{n}}}[\theta].
\end{split}
\label{eq:ICASSP2637}
\eeq
Following \eqref{eq:ICASSP2637}, the corresponding natural parameters are defined as
\beq
    \begin{split}
        \nu^{b_{f}}_{n}=\frac{\mu^{b_{f}}_{n}}{\tau^{b_{f}}_{n}};\\
        \xi^{b_{f}}_{n}=\frac{1}{\tau^{b_{f}}_{n}}.
    \end{split}
\eeq
Inspired by \cite{zhao2025expectationsexpectationpropagation}, to prevent non-integrable beliefs from blocking further updates, it suffices to ensure that the belief at the next factor $f_{k}(\theta)$ remains integrable, where $k=(n \mod N)+1$. 

Due to the following two observations:
\begin{itemize}
    \item $\left(\nu^{f\to \theta}_n, \xi^{f\to \theta}_n\right)$ and $\left(\nu^{f\leftarrow \theta}_k, \xi^{f\leftarrow \theta}_k\right)$ are linearly related through \eqref{eq:icassp2666};
    \item no other messages are updated between the updates of $\Delta^{f\to \theta}_{n}(\theta)$ and $\Delta^{f\leftarrow \theta}_{k}(\theta)$,
\end{itemize}
and hence, there exists a bijection between $\left(\nu^{f\to \theta}_n, \xi^{f\to \theta}_n\right)$ and $\left(\nu^{f\leftarrow \theta}_k, \xi^{f\leftarrow \theta}_k\right)$. We therefore define a domain $\Omega_{n}$ for $\left(\nu^{f\to \theta}_n, \xi^{f\to \theta}_n\right)$ such that, if $\left(\nu^{f\to \theta}_n, \xi^{f\to \theta}_n\right)\in \Omega_{n}$ the belief update at the next factor node
\beq
    b_{f_{k}}(\theta)\propto f_{k}(\theta)\Delta^{f\leftarrow \theta}_{k}(\theta)
\eeq
is integrable, where  $k=(n \mod N)+1$. 

Based on the EP rules in \eqref{eq:icassp2607t} and \eqref{eq:icassp2608t}, ACEP merges the projection and message-update steps into a direct update of the factor-to-variable message.

From \eqref{eq:icassp2608t}, the projected belief admits the form
\beq
\begin{split}
    \bh_{f_{n}}(\theta)&=\mathcal{M}\left[\theta\left| \nu^{\bh_{f}}_n\left(\nu^{f\to \theta}_n, \xi^{f\to \theta}_n\right), \xi^{\bh_{f}}_{n}\left(\nu^{f\to \theta}_n, \xi^{f\to \theta}_n\right)\right.\right]\\
    &\propto \Delta^{f\leftarrow \theta}_{n}(\theta) \Delta^{f\to \theta}_{n}(\theta),
\end{split}
\label{eq:icassp2671}
\eeq
with
\beq
\begin{split}
    \nu^{\bh_{f}}_n\left(\nu^{f\to \theta}_n, \xi^{f\to \theta}_n\right)=\nu^{f\to \theta}_n+\nu^{f\leftarrow \theta}_{n}\\
    \xi^{\bh_{f}}_n\left(\nu^{f\to \theta}_n, \xi^{f\to \theta}_n\right)=\xi^{f\to \theta}_n +\xi^{f\leftarrow \theta}_{n}.
\end{split}
\label{eq:icassp2672}
\eeq
Since ACEP is iterative, the parameters $\left(\nu^{f\leftarrow \theta}_{n}, \xi^{f\leftarrow \theta}_{n}\right)$ are treated as constants during the update of $\Delta^{f\to \theta}_{n}(\theta)$. Substituting \eqref{eq:icassp2671} and \eqref{eq:icassp2672} into the EP rule \eqref{eq:icassp2607t} as the second KLD parameter, and incorporating the domain constraint $\Omega_{n}$, yields
\beq
    \begin{split}
        \left(\nu^{f\to \theta}_n, \xi^{f\to \theta}_n\right)=\arg\!\!\!\!\!\!\!\min_{\left(\nu^{f\to \theta}_n, \xi^{f\to \theta}_n\right)\in \Omega_{n}} \mathrm{KLD}\left[b_{f_{n}}(\theta)\|\bh_{f_{n}}(\theta)\right],
    \end{split}
    \label{eq:icassp2673}
\eeq
where the left KLD parameter $b_{f_{n}}(\theta)$ is defined in \eqref{eq:ICASSP2636true}. Expanding the KLD and discarding the constants gives
\beq
    \begin{split}
        \left(\nu^{f\to \theta}_n, \xi^{f\to \theta}_n\right)=\arg\!\!\!\!\!\!\!\min_{\left(\nu^{f\to \theta}_n, \xi^{f\to \theta}_n\right)\in \Omega_{n}} L_n\left(\nu^{f\to \theta}_n, \xi^{f\to \theta}_n\right),
    \end{split}
    \label{eq:icassp2674}
\eeq
where
\beq
    \begin{split}
        L_n\left(\nu^{f\to \theta}_n, \xi^{f\to \theta}_n\right)=-\frac{\ln\left[\xi^{\bh_{f}}_n\left(\nu^{f\to \theta}_n, \xi^{f\to \theta}_n\right)\right]}{2}\\
        +\frac{\left[\nu^{\bh_{f}}_n\left(\nu^{f\to \theta}_n, \xi^{f\to \theta}_n\right)-\xi^{\bh_{f}}_n\left(\nu^{f\to \theta}_n, \xi^{f\to \theta}_n\right)\mu^{b_{f}}_n\right]^2}{2\xi^{\bh_{f}}_n\left(\nu^{f\to \theta}_n, \xi^{f\to \theta}_n\right)}\\
        +\frac{\xi^{\bh_{f}}_n\left(\nu^{f\to \theta}_n, \xi^{f\to \theta}_n\right)\tau^{b_{f}}_n}{2}.
    \end{split}
    \label{eq:icassp2675}
\eeq

For the unconstrained case $\Omega_n=\mathbb{R}^2$, the optimal point of \eqref{eq:icassp2674} is obtained by setting the partial derivative with respect to each of the parameters to zero:
\beq
    \begin{split}
        \xi^{f\to \theta}_{n}&=\frac{1}{\tau^{b_{f}}_{n}}-\xi^{f\leftarrow \theta}_{n}\\
        \nu^{f\to \theta}_{n}&=(\xi^{f\to \theta}_{n}+\xi^{f\leftarrow \theta}_{n})\mu^{b_{f}}_{n}-\nu^{f\leftarrow \theta}_n\\
        &=\frac{\mu^{b_{f}}_{n}}{\tau^{b_{f}}_n}-\nu^{f\leftarrow \theta}_n.
    \end{split}
\eeq
Compared to the mean-variance representation, the natural-parameter representation allows smooth transition as $\xi^{f\to \theta}_n$ crosses zero. Thus, $\xi^{f\to \theta}_n$ and $\nu^{f\to \theta}_n$ remain well-defined in $\mathbb{R}$ as long as the belief $b_{f_{n}}(\theta)$ is integrable with well-defined mean and variance.

Finally, we approximate the mean and variance of the given univariate factored PDF using those of the variable-level belief:
\beq
    b_{\theta}(\theta)=\mathcal{M}(\theta|\nu^{b_{\theta}}, \xi^{b_{\theta}})\propto \prod_{n}\Delta^{f\to \theta}_{n}(\theta),
    \label{eq:icassp2680rg}
\eeq
with 
\beq
\begin{split}
    &\nu^{b_{\theta}}=\sum_{n}\nu^{f\to \theta}_{n}\\
    &\xi^{b_{\theta}}=\sum_{n}\xi^{f\to \theta}_{n}.
\end{split}
\label{eq:icassp2678}
\eeq
The corresponding belief variance and mean are
\beq
\begin{split}
    \tau^{b_{\theta}}=\frac{1}{\xi^{b_{\theta}}}\\
    \mu^{b_{\theta}}=\frac{\nu^{b_{\theta}}}{\xi^{b_{\theta}}}.
\end{split}
\label{eq:icassp2679}
\eeq

\begin{proposition}
    Assume that $b_{\theta}(\theta)$ is defined as in \eqref{eq:icassp2680rg}. If the algorithm in \eqref{eq:icassp2665}-\eqref{eq:icassp2679} is initialized with messages such that $b_{\theta}(\theta)$ is integrable, then it remains integrable at every iteration.
\end{proposition}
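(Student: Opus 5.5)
We argue by induction over sub-iterations, exactly as in the proof of Proposition~\ref{prop:icassp2601}. The $i$-th sub-iteration updates only the messages to and from factor $f_n$, with $n=[(i-1)\bmod N]+1$. The base case is the integrability assumption on $b_{\theta}$ at initialization. For the inductive step, we assume $b_{\theta;(i)}(\theta)$ is integrable, i.e. $\xi^{b_{\theta}}_{(i)}=\sum_{n'}\xi^{f\to\theta}_{n';(i)}>0$, and show $\xi^{b_{\theta}}_{(i+1)}>0$ (or $+\infty$, which corresponds to a Dirac delta).

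The key identity comes from comparing \eqref{eq:icassp2666} with \eqref{eq:icassp2678}. At sub-iteration $i+1$, factor $f_m$ is processed, and we have
\[
\nu^{b_{\theta}}=\nu^{f\to\theta}_{m}+\nu^{f\leftarrow\theta}_{m},\qquad \xi^{b_{\theta}}=\xi^{f\to\theta}_{m}+\xi^{f\leftarrow\theta}_{m}.
\]
By \eqref{eq:icassp2672}, this says that right after the update $b_{\theta;(i+1)}$ coincides with the projected factor belief $\bh_{f_m}$ of \eqref{eq:icassp2671}.

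Next, $\Delta^{f\to\theta}_m$ is obtained from \eqref{eq:icassp2674}. This step is only executed when $b_{f_m}$ is integrable. That condition is guaranteed by the constraint $\Omega_{m-1}$ imposed at the previous sub-iteration, or, in the very first step, by the assumption that the recursion is well defined. Under this condition $\tau^{b_f}_m=\var_{b_{f_m}}[\theta]\ge 0$.

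We then check where the minimizer lands. In the unconstrained case the minimizer gives $\xi^{\bh_f}_m=1/\tau^{b_f}_m>0$, or $+\infty$ when the variance is zero. In the constrained case the objective $L_m$ in \eqref{eq:icassp2675} contains $-\tfrac12\ln\xi^{\bh_f}_m$, so its effective domain is $\xi^{\bh_f}_m>0$. Hence any finite-valued minimizer over $\Omega_m$ also satisfies $\xi^{\bh_f}_m>0$. Either way, $\xi^{b_{\theta};(i+1)}=\xi^{\bh_f}_m>0$, which makes $\mathcal{M}(\theta|\nu^{b_\theta},\xi^{b_\theta})$ integrable and closes the induction.

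The main obstacle is the constrained case. We must argue that the minimization over $\Omega_m$ is feasible with $\xi^{\bh_f}_m>0$, i.e. that $\Omega_m$ intersects the half-plane $\{\xi^{f\to\theta}_m>-\xi^{f\leftarrow\theta}_m\}$; otherwise the argmin would not exist. The first thing we would try is to observe that this half-plane is exactly the set on which $L_m<\infty$, so a well-posed update (which the algorithm presupposes) forces $\xi^{\bh_f}_m>0$. If that is not enough, we would show that $\Omega_m$ contains points with large $\xi^{f\to\theta}_m$. Such points give $\xi^{f\leftarrow\theta}_{k}$ large, so $\Delta^{f\leftarrow\theta}_k$ is a narrow Gaussian and $f_k\Delta^{f\leftarrow\theta}_k$ is integrable because $f_k$ is integrable and nonnegative. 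This would show the feasible set is nonempty and meets the positivity half-plane.
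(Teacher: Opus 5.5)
Your proposal is correct and follows the paper's proof: induction over sub-iterations, the identity $\xi^{b_{\theta}}=\xi^{f\to\theta}_n+\xi^{f\leftarrow\theta}_n$ right after $\Delta^{f\to\theta}_n$ is updated, and the observation that the $-\tfrac{1}{2}\ln\xi^{\bh_f}_n$ term in $L_n$ confines the minimizer to $\xi^{f\to\theta}_n+\xi^{f\leftarrow\theta}_n>0$. The feasibility concern you raise is harmless, since $\Omega_n=\mathbb{R}\times(\xi^{\mathrm{thres}}_n,+\infty)$ is unbounded above; moreover, the explicit update \eqref{eq:icassp2690} always gives $\xi^{f\to\theta}_n\geq 1/\tau^{b_f}_n-\xi^{f\leftarrow\theta}_n$, so $\xi^{b_{\theta}}\geq 1/\tau^{b_f}_n>0$ follows directly.
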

\begin{proof}
    As in Proposition~\ref{prop:icassp2601}, we proceed by mathematical induction. By assumption, the initial $b_{\theta}(\theta)$ is integrable.

    As the induction hypothesis, assume that after updating the message $\Delta^{f\to \theta}_{k}(\theta)$, the belief $b_{\theta}(\theta)$ remains integrable. 

    Let $n=(k\mod N)+1$. The algorithm enforces that the factor-level belief $b_{f_{n}}(\theta)$ is integrable. Consequently, the objective function \eqref{eq:icassp2675} is well defined. From the first term on the right-hand side of \eqref{eq:icassp2675}, we see that, for $L_{n}$ to be well defined, it is required that 
    \beq
        \xi^{f\to \theta}_{n}+\xi^{f\leftarrow \theta}_n>0,
    \eeq
    which defines the natural domain for $L_{n}$.

    After updating $\xi^{f\to \theta}_{n}$, the precision (inverted variance) of the variable-level belief becomes
    \beq
        \xi^{b_{\theta}}=\xi^{f\to \theta}_{n}+\xi^{f\leftarrow \theta}_n>0.
    \eeq
    Therefore, the belief $b_{\theta}(\theta)$ remains integrable.
\end{proof}

\subsection{GMM Case}

As an illustrative example, consider Gaussian mixture factors of the form given in \eqref{eq:ICASSP2619}
\beq
f_n(\theta)\propto \sum_{s_n} p_{s_n} \mathcal{M}(\theta|\nu_{s_n}, \xi_{s_n}),\label{eq:ICASSP2636}
\eeq
where
\beq
    \begin{split}
    \nu_{s_{n}}=\frac{\mu_{s_{n}}}{\tau_{s_{n}}};\\
    \xi_{s_{n}}=\frac{1}{\tau_{s_{n}}}.
\end{split}
\eeq
From \eqref{eq:ICASSP2636true} and \eqref{eq:ICASSP2632}, the factor-level belief at $f_{n}$ can be written as
\beq
\begin{split}
    b_{f_{n}}(\theta)\propto \sum_{s_n} \omega_{s_n}\mathcal{M}(\theta|\nu_{s_{n}}+\nu^{f\leftarrow \theta}_{n}, \xi_{s_{n}}+\xi^{f\leftarrow \theta}_{n}),
    \label{eq:ICASSP2643t}
\end{split}
\eeq
where
\beq
    \omega_{s_n}\!\!\!=\!\frac{p_{s_n}\sqrt{\xi_{s_n}}}{\sqrt{\xi_{s_{n}}\!\!\!+\!\xi^{f\leftarrow \theta}_{n}}}\mathcal{UM}\!\!\left(\!0\!\left|\frac{\xi^{f\leftarrow \theta}_{n}\nu_{s_n}\!\!\!-\!\xi_{s_n}\nu^{f\leftarrow \theta}_{n}}{\xi_{s_n}+\xi^{f\leftarrow \theta}_{n}}, \frac{\xi_{s_n}\xi^{f\leftarrow \theta}_{n}}{\xi_{s_n}\!\!\!+\!\xi^{f\leftarrow \theta}_{n}}\!\!\right.\right).
    \label{eq:ICASSP2643}
\eeq

As discussed earlier, $\xi^{f\leftarrow \theta}_{n}$ may be zero, which is a discontinuity in \eqref{eq:ICASSP2643}. However, due to the design of the additional domain $\Omega_{n}$ as we will see in \eqref{eq:icassp2690df} and 
\eqref{eq:icassp2690dfs}, we use the ansatz that $\forall s_n: \xi_{s_n}\!\!\!+\!\xi^{f\leftarrow \theta}_{n}\geq 0$. To analyze the asymptotic behavior of \eqref{eq:ICASSP2643t}, we examine the ratio between $\omega_{s_n'}$ and $\omega_{s_n}$
\beq
\begin{split}
    &\rho^{s_{n}'}_{s_n}=\frac{\omega_{s_n'}}{\omega_{s_n}}=\frac{p_{s_{n}'}\sqrt{\xi_{s'_{n}} \left(\xi_{s_{n}}+\xi^{f\leftarrow \theta}_{n}\right)}}{p_{s_{n}}\sqrt{\xi_{s_{n}} \left(\xi_{s'_{n}}+\xi^{f\leftarrow \theta}_{n}\right)}} \\
    &\cdot\exp\!\!\left(\!\!-\frac{1}{2}\!\!\left[\frac{\xi^{f\leftarrow\theta}_{n}\nu_{s'_{n}}^2}{\xi_{s'_{n}}\left(\xi_{s'_{n}}+\xi^{f\leftarrow\theta}_{n}\right)}-\frac{2\nu_{s'_{n}}\nu^{f\leftarrow\theta}_{n}}{\xi_{s'_{n}}+\xi^{f\leftarrow\theta}_{n}}+\frac{2\nu_{s_{n}}\nu^{f\leftarrow\theta}_{n}}{\xi_{s_{n}}+\xi^{f\leftarrow\theta}_{n}}\right.\right.\\
    &\left.\left.-\frac{\xi^{f\leftarrow \theta}_{n}\nu_{s_{n}}^2}{\xi_{s_{n}}\left(\xi_{s_{n}}+\xi^{f\leftarrow \theta}_{n}\right)}+\frac{\nu^{f\leftarrow \theta}_{n}}{\xi_{s_{n}}+\xi^{f\leftarrow \theta}_{n}}-\frac{\nu^{f\leftarrow\theta}}{\xi_{s'_n}+\xi^{f\leftarrow \theta}_{n}}\right]\right).\\
\end{split}
\label{eq:ICASSP2645}
\eeq
From \eqref{eq:ICASSP2645}, we observe that $\rho^{s_{n}'}_{s_n}$ remains finite and positive if $\xi^{f\leftarrow \theta}_{n}=0$. If $\xi^{f\leftarrow \theta}_{n}\to (-\xi_{s'_n})^+$, we have $\rho^{s_{n}'}_{s_n}\to 0$. Moreover, we can verify that if $\xi^{f\leftarrow \theta}_{n}\to (-\xi_{s'_n})^+$, $\rho^{s_{n}'}_{s_n}$ is a higher order infinitesimal of $\xi^{f\leftarrow \theta}_{n} +\xi_{s'_n}$. A similar asymptotic behavior at $\xi^{f\leftarrow \theta}_{n}\to (-\xi_{s_n})^+$ can be obtained by analyzing $\rho^{s_{n}}_{s'_n}$.
Consequently, \eqref{eq:ICASSP2643t} can be equivalently written as 
\beq
    \forall s_{n}: b_{f_{n}}(\theta)\propto \sum_{s'_n} \rho^{s'_n}_{s_n}\mathcal{M}(\theta|\nu_{s'_{n}}+\nu^{f\leftarrow \theta}_{n}, \xi_{s'_{n}}+\xi^{f\leftarrow \theta}_{n}).
    \label{eq:ICASSP2646}
\eeq
To normalize \eqref{eq:ICASSP2646}, we define the normalized ratio 
\beq
    \rhob^{s'_n}_{s_n}=\frac{\rho^{s'_n}_{s_n}}{\sum_{s'_{n}}{\rho^{s'_{n}}_{s_{n}}}}.
    \label{eq:icassp2688}
\eeq
The mean and variance of $b_{f_{n}}(\theta)$ in \eqref{eq:ICASSP2646} are given by
\beq
    \begin{split}
    &\mu^{b_{f}}_{n}=\sum_{s'_{n}}\frac{\rhob^{s'_{n}}_{s_{n}}\left(\nu_{s'_{n}}+\nu^{f\leftarrow \theta}_{n}\right)}{\xi_{s'_{n}}+\xi^{f\leftarrow \theta}_{n}}\\
    &\tau^{b_{f}}_{n}\!=\!\left(\!\sum_{s'_{n}}\rhob^{s'_n}_{s_n}\!\!\left[\!\frac{1}{\xi_{s'_{n}}+\xi^{f\leftarrow \theta}_{n}}\!+\!\!\left(\!\frac{\nu_{s'_{n}}\!+\!\nu^{f\leftarrow \theta}_{n}}{\xi_{s'_{n}}\!+\!\xi^{f\leftarrow \theta}_{n}}\right)^2\right]\right)\!\!-\!(\mu^{b_{f}}_{n})^2.
\end{split}
\label{eq:icassp2689}
\eeq
To close the loop, we update the message $\Delta^{f\to \theta}_{n}(\theta)$ using \eqref{eq:icassp2673} - \eqref{eq:icassp2675}. 
Let the next factor be $k=(n\mod N)+1$. Since the integrability of a GMM is determined by the signs of its component variances/precisions, we define the threshold for $\xi^{f\to \theta}_{n}$ as
\beq
\begin{split}
    \xi^{\mathrm{thres}}_{n}=-\min_{s_k}(\xi_{s_k})-\prod_{\substack{n'\neq n,\,  n'\neq k}}\xi^{f\to \theta}_{n'}\\
    =-\min_{s_k}(\xi_{s_k})-\xi^{f\leftarrow \theta}_{n}+\xi^{f\to \theta}_{k}.
\end{split}
\label{eq:icassp2690df}
\eeq
Thus, the admissible domain of $\left(\nu^{f\to \theta}_n, \xi^{f\to \theta}_n\right)$ that ensures the integrability of $b_{f_{k}}(\theta)$ is
\beq
    \Omega_{n}=\mathbb{R}\times(\xi^{\mathrm{thres}}_{n}, +\infty),
    \label{eq:icassp2690dfs}
\eeq
where $(\xi^{\mathrm{thres}}_{n}, \infty)$ denotes an open interval from $\xi^{\mathrm{thres}}_{n}$ to $+\infty$. However, we can include $\xi^{\mathrm{thres}}_{n}$ by using the asymptotic analysis above. 

To solve the KLD projection in \eqref{eq:icassp2673}, since $\nu^{f\to \theta}_n$ is unconstrained, we first set the partial derivative with respect to $\nu^{f\to \theta}_n$ to zero, which gives us
\beq
    \nu^{f\to \theta}_{n}=(\xi^{f\to \theta}_{n}+\xi^{f\leftarrow \theta}_{n})\mu^{b_{f}}_{n}-\nu^{f\leftarrow \theta}_n.
    \label{eq:icassp2687}
\eeq
Substituting \eqref{eq:icassp2687} into \eqref{eq:icassp2675}, we can verify that $L_n\left(\nu^{f\to \theta}_n, \xi^{f\to \theta}_n\right)$ is decreasing for 
\beq
\xi^{f\to \theta}_n\in \left(-\xi^{f\leftarrow \theta}_n, \frac{1}{\tau^{b_{f}}_{n}}-\xi^{f\leftarrow \theta}_{n}\right],
\eeq
and increasing for
\beq
    \xi^{f\to \theta}_n\in \left(\frac{1}{\tau^{b_{f}}_{n}}-\xi^{f\leftarrow \theta}_{n}, +\infty\right).
\eeq
Therefore, $\xi^{f\to \theta}_n$ is obtained as
\beq
    \xi^{f\to \theta}_n=\begin{cases}
        \frac{1}{\tau^{b_{f}}_{n}}-\xi^{f\leftarrow \theta}_{n}, \; &\text{if $\frac{1}{\tau^{b_{f}}_{n}}-\xi^{f\leftarrow \theta}_{n}> \xi^{\mathrm{thres}}_{n}$}\\
        \xi^{\mathrm{thres}}_{n}, \; &\text{otherwise}
    \end{cases}.
    \label{eq:icassp2690}
\eeq
By substituting \eqref{eq:icassp2690} back into \eqref{eq:icassp2687}, we obtain the update for the complete message $\Delta^{f\to \theta}_{n}(\theta)$.

After the final iteration, the mean and variance of the product of GMM factors can be obtained from the variable-level belief
\beq
    b_{\theta}(\theta)=\prod_{n}\Delta^{f\to \theta}_{n}(\theta).
    \label{eq:icassp2691}
\eeq
Accordingly, the mean and variance of $b_{\theta}(\theta)$ can be explicitly computed as
\beq
\begin{split}
    \mu^{b_{\theta}}=\frac{\sum_{n} \nu^{f\to \theta}_{n}}{\sum_{n} \xi^{f\to \theta}_{n}};\\
    \tau^{b_{\theta}}=\frac{1}{\sum_{n} \xi^{f\to \theta}_{n}}.
\end{split}
\label{eq:icassp2697}
\eeq

This concludes the ACEP algorithm for Gaussian mixture model factors, as summarized in Algorithm~\ref{algo:icassp2603}.

\floatstyle{spaceruled}
\restylefloat{algorithm}
\begin{algorithm}[t]
\caption{ACEP for GMM Factors}\label{algo:icassp2603}
\begin{algorithmic}[1]
\State Initialize $\forall n: \nu^{f\to \theta}_{n}=0, \xi^{f\to \theta}_{n}=1$
\Repeat
\State [For every $n\in [1,\dots, N]$, repeat the following]
\State Update $\nu^{b_{\theta}}$ and $\xi^{b_{\theta}}$ by \eqref{eq:icassp2678}
\State $\xi^{f\leftarrow \theta}_{n}=\xi^{b_{\theta}}-\xi^{f\to \theta}_n$
\State $\nu^{f\leftarrow \theta}_{n}=\nu^{b_{\theta}}-\nu^{f\to \theta}_n$
\State Choose the index of a GMM component $s_n$.
\State $\forall s'_{n}:$ compute $\rho^{s'_{n}}_{s_{n}}$ by \eqref{eq:ICASSP2645}
\State $\forall s'_{n}:$ compute $\rhob^{s'_{n}}_{s_{n}}$ by \eqref{eq:icassp2688}
\State Update $\mu^{b_f}_{n}$ and $\tau^{b_{f}}_{n}$ by \eqref{eq:icassp2689}
\State $k=(n\mod N)+1$
\State Update $\xi^{\mathrm{thres}}_{n}$ by \eqref{eq:icassp2690df}
\State Update $\xi^{f\to \theta}_{n}$ by \eqref{eq:icassp2690}
\State Update $\nu^{f\to \theta}_{n}$ by \eqref{eq:icassp2687}

\Until{Convergence}
\State Output $\mu^{b_{\theta}}$ and $\tau^{b_{\theta}}$ as mean and variance by \eqref{eq:icassp2697}
\end{algorithmic}
\end{algorithm}

\section{Simulation Results \label{sect:SIMULATION}}
To verify the effectiveness of the proposed methods, we generate factored PDFs consisting of $N=8$ GMM factors in each realization. Each GMM factor contains two Gaussian components with randomly assigned weights, means, and variances. The product of these $8$ GMM factors results in another GMM with $2^{8}$ components. To assess the approximation accuracy, we compute the exact mean and variance of the resulting $2^{8}$-component GMM using brute force, denoted by $\mu^{\thetah}_{GMM}$ and $\tau^{\thetah}_{GMM}$ respectively.

The normalized squared error (NSE) for a single realization is defined as
\beq
\begin{split}
    &NSE_{\mu}=\frac{(\mu^{\thetah}-\mu^{\thetah}_{GMM})^2}{(\mu^{\thetah}_{GMM})^2}\\
    &NSE_{\tau}=\frac{(\tau^{\thetah}-\tau^{\thetah}_{GMM})^2}{(\tau^{\thetah}_{GMM})^2}.
\end{split}
\label{eq:ICASSP2630}
\eeq
We simulated $10{,}000$ realizations with different sets of GMM factors.
The proposed methods are compared with the commonly used Clipping EP \cite{8713501, 10378663}. Since checking the integrability of the belief $b_{f_{n}}(\theta)$ can be computationally expensive, we consider two variants for both proposed EP-based algorithms:
\begin{itemize}
    \item \textbf{Strict approach (STR.)}: the algorithms are implemented exactly as proposed.
    \item \textbf{Relaxed approach (RLX.)}: for persistent EP, we check the integrability of $\Delta^{f\leftarrow \theta}_{n}(\theta)$ instead of $b_{f_{n}}(\theta)$. For ACEP, we simply set $\xi^{\mathrm{thres}}_{n}$ to zero.
\end{itemize}
The simulation results for mean estimations are shown in Fig.~\ref{fig:results_mean}, while the results for variance estimation are shown in Fig.~\ref{fig:results_var}. The abbreviation ``PEP'' denotes persistent EP.
\begin{figure}[t]
    \centering
    \includegraphics[width=0.48\textwidth]{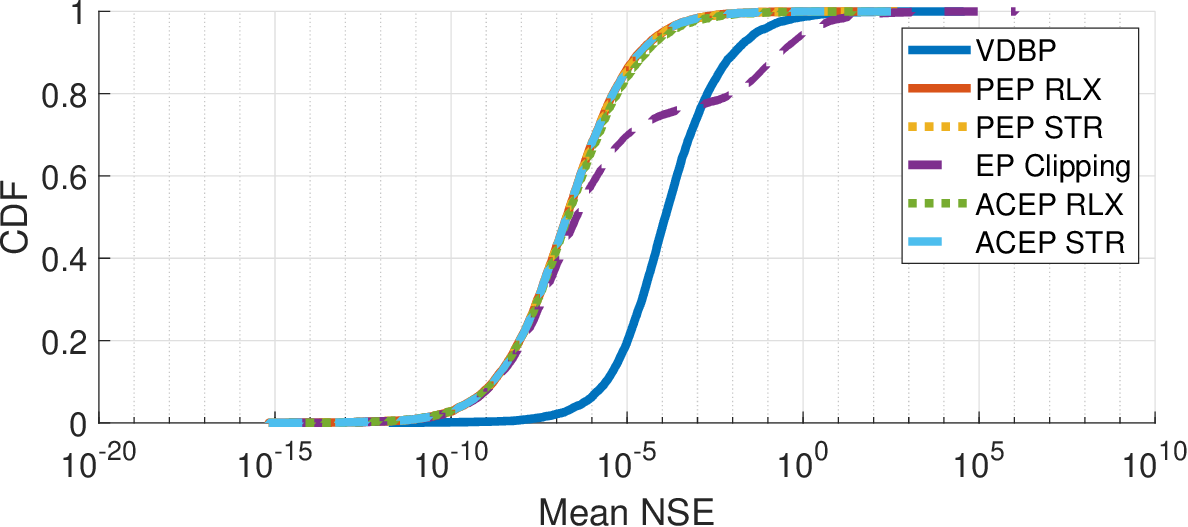}
    \vspace{-2mm}
    \caption{NSE CDF of mean estimates.} 
    \vspace{-2mm}
    \label{fig:results_mean}
\end{figure}

\begin{figure}[t]
    \centering
    \includegraphics[width=0.48\textwidth]{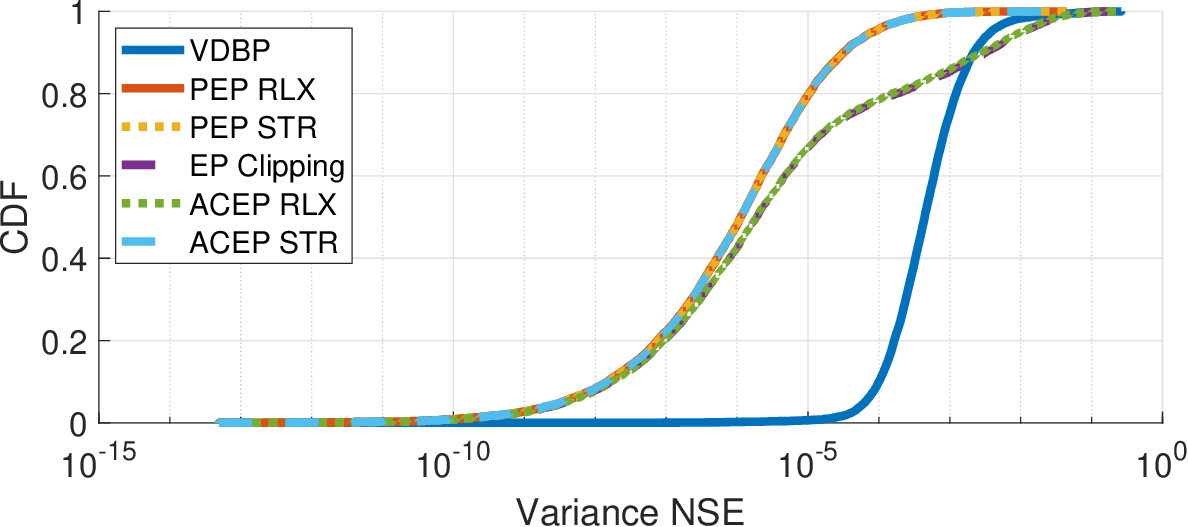}
    \vspace{-2mm}
    \caption{NSE CDF of variance estimates.} 
    \vspace{-2mm}
    \label{fig:results_var}
    \vspace{-4mm}
\end{figure}

In both figures, the curves corresponding to strict persistent EP, relaxed persistent EP, and strict ACEP coincide and consistently exhibit the best performance. All proposed methods outperform clipping EP in terms of NSE at the $95\%$ CDF.

An interesting observation is that the mean estimate produced by relaxed ACEP is close to the best-performing methods, whereas its variance estimate is close to that of clipping EP. This behavior can be explained by comparing relaxed ACEP with strict ACEP and clipping EP. The variance estimation is closely related to the natural parameter $\xi$. Both relaxed ACEP and clipping EP disallow non-integrable messages. Therefore, Fig.~\ref{fig:results_var} suggests that the optimal set of messages may include non-integrable messages. By examining the updates of the mean and natural parameter $\nu$ in clipping EP and ACEP, we observe that ACEP exploits the limit around the discontinuity $\xi^{f\to \theta}_{n}=0$. In contrast, in clipping EP, when $\tau^{f\to \theta}_{n}$ is clipped to $+\infty$, $\mu^{f\to \theta}_{n}$ is still treated as a finite value. However, as indicated by ACEP, $\mu^{f\to \theta}_{n}$ should instead scale as $O(\tau^{f\to \theta}_{n})$.

\section{Conclusions}
This paper proposes one BP-based algorithm and two EP-based algorithms for estimating the means and variances of univariate factored PDFs. 

In the BP-based method, VDBP, we construct a multivariate measurement model whose posterior can be marginalized to the given univariate factored PDF. GaBP is then applied to this multivariate measurement model to approximate the marginal posterior. 

EP can be directly applied to estimate the mean and variance of a univariate factored PDF when all factor-level beliefs remain integrable throughout the iterations. To address scenarios involving non-integrable factor-level beliefs, we proposed persistent EP and ACEP.

Persistent EP adopts a passive strategy by skipping message updates around a factor whenever the corresponding belief at the factor is non-integrable. In contrast, ACEP takes an active approach by modifying the projection family to ensure that the belief at the next factor is always integrable.

{\bf Acknowledgements}
EURECOM's research is partially supported by its industrial members:
ORANGE, BMW, SAP, iABG,  Norton LifeLock, by the Franco-German projects CellFree6G and 5G-OPERA, the French PEPR-5G projects PERSEUS and YACARI, the EU H2030 project CONVERGE, and by a Huawei France funded Chair towards Future Wireless Networks.



{\renewcommand{\baselinestretch}{1.3}\Large
\bibliographystyle{IEEEtran}
\bibliography{ICASSPGMM}

@misc{minka2013expectationpropagationapproximatebayesian,
      title={Expectation Propagation for approximate Bayesian inference}, 
      author={Thomas P. Minka},
      year={2013},
      eprint={1301.2294},
      archivePrefix={arXiv},
      primaryClass={cs.AI},
      url={https://arxiv.org/abs/1301.2294}, 
}

@book{pearl2014probabilistic,
  title={Probabilistic reasoning in intelligent systems: networks of plausible inference},
  author={Pearl, Judea},
  year={2014},
  publisher={Elsevier}
}

@BOOK{8187302,
  author={Wainwright, Martin J. and Jordan, Michael I.},
  title={Graphical Models, Exponential Families, and Variational Inference},
  year={2008},
  doi={10.1561/2200000001}}

@misc{murphy2013loopybeliefpropagationapproximate,
      title={Loopy Belief Propagation for Approximate Inference: An Empirical Study}, 
      author={Kevin Murphy and Yair Weiss and Michael I. Jordan},
      year={2013},
      eprint={1301.6725},
      archivePrefix={arXiv},
      primaryClass={cs.AI},
      url={https://arxiv.org/abs/1301.6725}, 
}

@article{heskes2005approximate,
  title={Approximate inference techniques with expectation constraints},
  author={Heskes, Tom and Opper, Manfred and Wiegerinck, Wim and Winther, Ole and Zoeter, Onno},
  journal={Journal of Statistical Mechanics: Theory and Experiment},
  volume={2005},
  number={11},
  pages={P11015},
  year={2005},
  publisher={IOP Publishing}
}

@article{roche2025affine,
  title={Affine Filter Bank Modulation (AFBM): A Novel 6G ISAC Waveform with Low PAPR and OOBE},
  author={Roche Rayan Ranasinghe, Kuranage and Senger, Henrique L and Gon{\c{c}}alves, Gustavo P and Rou, Hyeon Seok and Chang, Bruno S and Thadeu Freitas de Abreu, Giuseppe and Le Ruyet, Didier},
  journal={arXiv e-prints},
  pages={arXiv--2509},
  year={2025}
}

@ARTICLE{9460784,
  author={Iimori, Hiroki and Takahashi, Takumi and Ishibashi, Koji and de Abreu, Giuseppe Thadeu Freitas and Yu, Wei},
  journal={IEEE Transactions on Wireless Communications}, 
  title={Grant-Free Access via Bilinear Inference for Cell-Free MIMO With Low-Coherence Pilots}, 
  year={2021},
  volume={20},
  number={11},
  pages={7694-7710},
  doi={10.1109/TWC.2021.3088125}}

@ARTICLE{8496782,
  author={Zou, Qiuyun and Zhang, Haochuan and Wen, Chao-Kai and Jin, Shi and Yu, Rong},
  journal={IEEE Signal Processing Letters}, 
  title={Concise Derivation for Generalized Approximate Message Passing Using Expectation Propagation}, 
  year={2018},
  volume={25},
  number={12},
  pages={1835-1839},
  doi={10.1109/LSP.2018.2876806}}

@ARTICLE{10378663,
  author={Karataev, Alexander and Forsch, Christian and Cottatellucci, Laura},
  journal={IEEE Open Journal of Signal Processing}, 
  title={Bilinear Expectation Propagation for Distributed Semi-Blind Joint Channel Estimation and Data Detection in Cell-Free Massive MIMO}, 
  year={2024},
  volume={5},
  number={},
  pages={284-293},
  doi={10.1109/OJSP.2023.3348343}}

@misc{rangan2018vectorapproximatemessagepassing,
      title={Vector Approximate Message Passing}, 
      author={Sundeep Rangan and Philip Schniter and Alyson K. Fletcher},
      year={2018},
      eprint={1610.03082},
      archivePrefix={arXiv},
      primaryClass={cs.IT},
      url={https://arxiv.org/abs/1610.03082}, 
}

@ARTICLE{6898015,
  author={Parker, Jason T. and Schniter, Philip and Cevher, Volkan},
  journal={IEEE Transactions on Signal Processing}, 
  title={Bilinear Generalized Approximate Message Passing—Part I: Derivation}, 
  year={2014},
  volume={62},
  number={22},
  pages={5839-5853},
  doi={10.1109/TSP.2014.2357776}}

@INPROCEEDINGS{10942970,
  author={Zhao, Zilu and Slock, Dirk},
  booktitle={2024 58th Asilomar Conference on Signals, Systems, and Computers}, 
  title={Decentralized Message-Passing for Semi-Blind Channel Estimation in Cell-Free Systems Based on Bethe Free Energy Optimization}, 
  year={2024},
  volume={},
  number={},
  pages={1443-1447},
  doi={10.1109/IEEECONF60004.2024.10942970}}

@inproceedings{schrempf2005optimal,
  title={Optimal mixture approximation of the product of mixtures},
  author={Schrempf, Oliver C and Feiermann, Olga and Hanebeck, Uwe D},
  booktitle={2005 7th International Conference on Information Fusion},
  volume={1},
  pages={8--pp},
  year={2005},
  organization={IEEE}
}

@article{ihler2003efficient,
  title={Efficient multiscale sampling from products of Gaussian mixtures},
  author={Ihler, Alexander and Sudderth, Erik and Freeman, William and Willsky, Alan},
  journal={Advances in Neural Information Processing Systems},
  volume={16},
  year={2003}
}

@ARTICLE{8713501,
  author={Rangan, Sundeep and Schniter, Philip and Fletcher, Alyson K.},
  journal={IEEE Transactions on Information Theory}, 
  title={Vector Approximate Message Passing}, 
  year={2019},
  volume={65},
  number={10},
  pages={6664-6684},
  doi={10.1109/TIT.2019.2916359}}

@INPROCEEDINGS{1211409,
  author={Sudderth, E.B. and Ihler, A.T. and Freeman, W.T. and Willsky, A.S.},
  booktitle={2003 IEEE Computer Society Conference on Computer Vision and Pattern Recognition, 2003. Proceedings.}, 
  title={Nonparametric belief propagation}, 
  year={2003},
  volume={1},
  number={},
  pages={I-I},
  doi={10.1109/CVPR.2003.1211409}}

@misc{zhao2025expectationsexpectationpropagation,
      title={Expectations in Expectation Propagation}, 
      author={Zilu Zhao and Fangqing Xiao and Dirk Slock},
      year={2025},
      eprint={2512.08034},
      archivePrefix={arXiv},
      primaryClass={cs.IT},
      url={https://arxiv.org/abs/2512.08034}, 
}
}


\end{document}